\DeclareMathOperator{\rowsp}{rowsp}
\newcommand{\CP}[1]{}
\newcommand{\cost}[1]{{\textcolor{blue}{}}}
\DeclareMathOperator{\sd}{sd}
\DeclareMathOperator{\affn}{Aff_n}
\DeclareMathOperator{\gln}{GL_n}
\DeclareMathOperator{\reg}{reg}
\DeclareMathOperator{\sat}{sat}
\newcommand{\Mod}[1]{\ \mathrm{mod}\ #1}
\newtheorem{Definition}{Definition}
\newtheorem{Notation}{Notation}
\newtheorem{Theorem}{Theorem}
\newtheorem{Proposition}[Theorem]{Proposition}
\newtheorem{Lemma}[Theorem]{Lemma}
\renewenvironment{proof}{\textsc{Proof:}} {$\Box$\\}
\newcommand{\NN}{\ensuremath{\mathbb{N}}\xspace}
\newcommand{\FF}{\ensuremath{\mathbb{F}}\xspace}
\newcommand{\groebner}{Gr\"obner\ }
\title{Stronger bounds on the cost of computing \groebner bases for HFE systems}
\author{Elisa Gorla\inst{1} \and Daniela Mueller\inst{2}\thanks{Research supported by a Postgraduate Government of Ireland Scholarship from the Irish Research Council.} \and Christophe Petit\inst{3}}
\institute{Institut de Math\'ematiques, Universit\'e de Neuch\^atel, Switzerland
\and School of Mathematics and Statistics, University College Dublin, Ireland 
\and School of Computer Science, University of Birmingham, United Kingdom}
\begin{document}
\pagestyle{plain}

\maketitle

\begin{abstract}
We give upper bounds for the solving degree and the last fall degree of the polynomial system associated to the HFE (Hidden Field Equations) cryptosystem. Our bounds improve the known bounds for this type of systems.
We also present new results on the connection between the solving degree and the last fall degree and prove that, in some cases, the solving degree is independent of coordinate changes.
\end{abstract}

\section{Introduction}

Multivariate cryptography is one of a handful of proposals of post-quantum cryptosystems, i.e. cryptosystems that would remain secure even in the presence of a quantum computer.
In multivariate cryptography, the hard computational problem that one has to solve in order to retrieve the original message from the ciphertext is solving a system of multivariate polynomial equations over a finite field.

%The Hidden Field Equations (HFE) cryptosystem~\cite{P} is a public key encryption scheme whose security relies on the hardness of solving a system of polynomial equations over the finite field $\mathbb{F}_{q^n}$, where $q$ is a prime power and $n>1$ is an integer.
%
%When the polynomial system is over a finite field $\mathbb{F}_{q^n}$, we can convert it to a system over its subfield $\mathbb{F}_{q}$ using the Weil descent method. This idea has been used to solve the HFE system, and also the system of summation polynomials for the Elliptic Curve Discrete Logarithm Problem~\cite{Gaudry}.

\groebner bases are a widely used tool for solving systems of polynomial equations. In positive characteristic, along with SAT solvers, they are essentially the only tool of general applicability at our disposal, since we have no numerical methods available. Bounds on the complexity of computing a (degree reverse lexicographic) \groebner basis of a system of polynomial equations associated to a given cryptosystem provide bounds on the complexity of recovering the secret message from the ciphertext, hence bounds on the security of the cryptosystems. Such bounds are of fundamental importance, as they give us an indication on how to choose the parameters of the cryptosystem in order to achieve the desired level of security.

Beyond Buchberger's Algorithm, a family of algorithms based on linear algebra are available for computing \groebner bases. They are based on an idea of Lazard~\cite{Lazard} and many variations of such algorithms are currently used. This family of algorithms includes $F_4/F_5$~\cite{F4,F5} and XL/Mutant XL~\cite{CKPS00}, whose complexity is measured by the \emph{solving degree}. For other variations of these algorithms, the complexity is measured by the last fall degree, first introduced in ~\cite{HKYCrypto}.

In this paper, we study the systems associated to the cryptosystem HFE in its basic version, as it was proposed by Patarin in~\cite{P}. It has been experimentally observed that \groebner basis algorithms perform much better on HFE systems than on generic systems~\cite{FJ}. More precisely, the solving degree appeared to be much smaller for HFE systems than for generic systems with the same number of variables and the same degrees.

In this paper, we provide upper bounds on the solving degree and on the last fall degree of such systems, which improve on the known ones. Our bound on the solving degree relies on results from~\cite{CG}, which connect the solving degree of a system of inhomogeneous equations to the Castelnuovo-Mumford regularity of the ideal generated by the homogenization of the equations. In particular, our bound on the solving degree relies on~\cite[Theorem~3.14, Theorem~3.23, and Corollary~3.25]{CG} and on a new estimate of the degree of the equations obtained by fake Weil descent, which we prove in Lemma~6 and Remark~7. Differently from other bounds on the solving degree that appeared previously in the literature, the bounds that we obtain are rigorously proved, meaning that our approach does not rely on any unproved assumptions or heuristics. 

Our bound on the last fall degree is inspired by the bound from~\cite{HKYY}.
Our main result is Theorem~\ref{main}, where we prove that the last fall degree of the Weil descent system of a set of polynomials over $\mathbb{F}_{q^n}$ is bounded by $(q-1)\lceil\log_q(d)+1\rceil+1$ for a certain $d$. When applied to HFE systems, this improves the bound given in~\cite{HKYY} by approximately a factor of 2. The improvement in the bound is a consequence of Lemma~8, where we obtain a bound which is tighter than in previous work. Our tighter bound allows us to produce a more precise estimate of the degrees in which certain polynomials are computed by a linear-algebra based \groebner basis algorithm, allowing us to get a tighter grip on the last fall degree of the system.

\subsection{Organisation of the paper}

The paper is organised as follows. In Section~\ref{sec:pre}, we recall the definitions of the solving degree and last fall degree and introduce the notation. In Section~\ref{sec:RelateDegrees}, we relate the solving degree and the last fall degree. In Section~\ref{sec:coord}, we show that the solving degree is invariant under coordinate changes, for systems which have a single solution of multiplicity one over the algebraic closure. In Section~\ref{sec:HFE}, we give a provable bound for the solving degree of HFE systems. Finally,  in Section~\ref{sec:betterbound} we prove our main theorem, which gives a better bound on the last fall degree for Weil descent systems, and in particular a better bound for HFE sytems. 

%2	\section{Preliminaries and notation}\label{sec:pre}
%3	\section{Relating the solving degree and last fall degree}\label{sec:RelateDegrees}
%4	\section{Solving degree and coordinate changes}\label{sec:coord}
%6	\section{An improved GCD bound}\label{sec:betterbound}
%5	\section{A simple bound for the solving degree of HFE}\label{sec:HFE}

\section{Preliminaries and notation}\label{sec:pre}

Let $k$ be a field, let $R=k[X]$ and $S=k[X_0,\ldots,X_{n-1}]$ be polynomial rings with coefficients in $k$. 
Let $R_{\leq d}$ and $S_{\leq d}$ be the $k$-linear spaces of polynomials of degree $\leq d$ in $R$ and $S$, respectively. For an ideal $I\subseteq S$, denote by $I_{\leq d}$ the vector space $I\cap S_{\leq d}$. For polynomials $f,g\in R$, write $f\Mod g$ for the remainder of the division of $f$ by $g$. 
Let $\mathcal{E}$ be a finite subset of $S$, and let $I$ be the ideal generated by $\mathcal{E}$. Let $\deg(\mathcal{E})=\max\{\deg(f) : f\in \mathcal{E}\}$.

\subsection{Last fall degree}

We will closely follow the notations in \cite{HKYY} and briefly recall some of its definitions, 
which we will need later. The reader is refered to \cite{HKYY} for more details.

\begin{Definition}
For $i\in\mathbb{Z}_{\geq0}$, we let $V_{\mathcal{E},i}$ be the smallest $k$-vector space such that
\begin{itemize}
 \item $\mathcal{E}\cap S_{\leq i}=\{f\in\mathcal{E} : \deg(f)\leq i\}\subseteq V_{\mathcal{E},i}$;
 \item if $g\in V_{\mathcal{E},i}$ and if $h\in S$ with $\deg(gh)\leq i$, then $gh\in 
 V_{\mathcal{E},i}$.
\end{itemize}
If $\mathcal{E}$ is fixed, we write $V_i$ instead of $V_{\mathcal{E},i}$. 
Let $V_{\mathcal{E},\infty}=\cup_{i\geq 0} V_{\mathcal{E},i}$.
\end{Definition}

Intuitively, we construct $V_i$ from $\mathcal{E}$ by doing ideal operations which only involve polynomials of degree at most $i$. 

\begin{remark}
It is easy to show that $V_{\mathcal{E},\infty}=I$.
\end{remark}

\begin{Notation}
For $g,h\in S$ and $i\in\mathbb{Z}_{\geq0}$, we write $g\equiv_{\mathcal{E},i}h$ if $g-h\in V_{\mathcal{E},i}$. If $\mathcal{E}$ is fixed, we write $g\equiv_i h$.
\end{Notation}

\begin{Definition}\label{LastFallDegree}\cite{HKYCrypto}
Let $\mathcal{E}$ be a finite subset of $S$ and let $I=(\mathcal{E})$ be the ideal generated by $\mathcal{E}$. The minimal $d\in\mathbb{Z}_{\geq0}\cup\{\infty\}$ such that for all $f\in I$ we have $f\in V_{\max\{d,\deg(f)\}}$ is called the last fall degree of $\mathcal{E}$ and is denoted by $d_\mathcal{E}$.
\end{Definition}

Notice that neither the last fall degree nor the vector spaces $V_{\mathcal{E},d}$ depend on the choice of a term order.

\medskip

We now briefly recall how the last fall degree is related to the complexity of solving a system $\mathcal{E}\subset S$. Suppose that the ideal generated by $\mathcal{E}$ is radical and let $e$ be the number of solutions of $\mathcal{E}$ over $\overline{k}$. Suppose also that one can factor a polynomial of degree $t$ in a number of field operations that is polynomial in $g(t)$, for some given function $g$ of $t$. In~\cite[Proposition~2.3 and Proposition~2.8]{HKYY} the authors outline an algorithm that computes the solutions of $\mathcal{E}$ in a number of field operations which is polynomial in $(m+d)^d$, $g(d)$, and the size of $\mathcal{E}$, where $d=\max\{d_\mathcal{E},e\}$.

\subsection{Solving degree}\label{solvdeg}

For the ease of the reader, we describe the computations carried on by a linear algebra-based \groebner basis algorithm as in \cite{Lazard}.
Fix a term order $\sigma$ and a degree $d\geq 1$. Let $[d]$ denote the set $\{0,\ldots,d\}$ and let 
$$M_d=\{a\in [d]^n\mid a_0+\ldots+a_{n-1}\leq d\}.$$ 
The elements of $M_d$ correspond to the monomials in $S_{\leq d}$ via 
$$a=(a_0,\ldots,a_{n-1})\longleftrightarrow X_0^{a_0}\cdots X_{n-1}^{a_{n-1}}.$$

Build a matrix $M$ whose columns are indexed by the elements of $M_d$ in decreasing order from left to right with respect to $\sigma$. The rows correspond to polynomials of the form $uf$ where $u\in S$ is a monomial, $f\in\mathcal{E}$, and $\deg(uf)\leq d$. Notice that this includes the possibility that $u=1$. In order to associate a row $r(g)$ to a polynomial $g$, write $$g=\sum_{a=(a_0,\ldots,a_{n-1})\in M_d} \alpha_a X_0^{a_0}\cdots X_{n-1}^{a_{n-1}},$$ then $r(g)_a=\alpha_a$. 

Perform Gaussian elimination on $M$ to obtain a matrix in reduced row echelon form. 
Any row $r=(r_{a}\mid a\in M_d)$ in the reduced row echelon form of $M$ corresponds to a polynomial 
$$f_r=\sum_{a\in M_d}r_a X_0^{a_0}\cdots X_{n-1}^{a_{n-1}}.$$ If $\deg(f_r)<d$, we add new rows to $M$ corresponding to polynomials of the form $uf_r$, where $u$ is a monomial, $\deg(uf_r)\leq d$ and $uf_r\not\in\rowsp(M)$. Here $\rowsp(M)$ denotes the rowspace of $M$. Repeat the computation of the reduced row echelon form and the operation of adding new rows, until there are no new rows added. 

\begin{Notation}\label{wd}
For any $d\geq 1$ let $W_d$ be the vector space generated by the rows of $M$ after running the algorithm that we just described. 
\end{Notation}

It is clear that $W_d\subseteq I_{\leq d}$. For a given $d$, one may have $W_d\neq I_{\leq d}$. However, it is well-known that $W_d=I_{\leq d}$ for $d\gg 0$. In particular, $W_d$ contains a \groebner basis of $I$ with respect to $\sigma$ for $d\gg 0$.

\begin{Definition}\label{sd}
The {\bf solving degree} of $\mathcal{E}$ with respect to a term order $\sigma$ is the least $d$ such that $W_d$ contains a \groebner basis of $I$ with respect to $\sigma$. We denote it by $\sd_{\sigma}(\mathcal{E})$.
\end{Definition}

\begin{remark}
Notice that the elements of a reduced \groebner basis of $I$ appear as rows of the matrix obtained from $M$ by running the algorithm described above. In fact, since the matrix is in reduced row echelon form, then no cancellation is possible among the leading terms of different rows. In particular, the leading terms of the elements of $W_d$ are exactly the leading terms of the rows of the matrix. Therefore, if $W_d$ contains a \groebner basis of $I$ with respect to $\sigma$, then there is a set of rows of the matrix which forms a \groebner basis of $I$. Take a minimal set of such rows. Since the matrix is in reduced row echelon form, they form a reduced \groebner basis of $I$.
\end{remark}

\subsection{Weil descent}

Again, we closely follow the setup and notations from \cite{HKYY}. Let $q$ be a prime power and $n\in\mathbb{Z}_{\geq1}$. Let $k$ be a finite field of cardinality $q^n$ and let $k'$ be its subfield of cardinality $q$. Let $R=k[X]$ be a polynomial ring and let $\mathcal{F}$ be a finite subset of $R$. Let $\alpha_0,\ldots,\alpha_{n-1}$ be a basis of $k/k'$. 

\begin{Definition}
Write $X=\sum^{n-1}_{j=0}\alpha_jX_j$. For $f\in\mathcal{F}$ we define $[f]_j\in k'[X_0,\ldots,X_{n-1}]$, $j=0,\ldots,n-1$, by 
$$f\left(\sum^{n-1}_{j=0}\alpha_jX_j\right)\equiv\sum^{n-1}_{j=0}[f]_j\alpha_j\Mod (X^q_j-X_j\mid j=0,\ldots,n-1)$$ 
with $[f]_j$ of minimal degree, i.e. $\deg_{X_i}([f]_j)\leq q-1$ for all $i$. 
Let $$\mathcal{F}'=\{[f]_j : f\in\mathcal{F}, j=0,\ldots,n-1\}$$ be the {\bf Weil descent system} of $\mathcal{F}$ with respect to the basis $\{\alpha_j\}$, and let $$\mathcal{F}'_f=\mathcal{F}'\cup\{X^q_j-X_j : j=0,\ldots,n-1\}.$$ 
Let $$\mathcal{F}_f=\mathcal{F}\cup\{X^{q^n}-X\}.$$
\end{Definition}

\begin{remark}
We have followed here the notation of \cite{HKYY}. The subscript $f$ refers to the \textit{field} equations added.
\end{remark}

Now we recall the fake Weil descent defined in \cite{HKYCrypto} and \cite{HKYY}. Unlike the Weil descent system, this system is defined over the larger field $k$. 

\begin{Definition}
Let $X^{e'}=X^e\Mod X^{q^n}-X$ and write $e'=\sum_{j=0}^{n-1}e'_jq^j$ in base $q$ expansion with $e'_j\in\{0,1,\ldots,q-1\}$. We let $$\overline{X^e}=X_0^{e'_0}\ldots X_{n-1}^{e'_{n-1}}\in S=k[X_0,\ldots,X_{n-1}].$$ 
This definition can be extended $k$-linearly to all polynomials in $R$ and gives a map from $R$ to $S$. For any $f\in R$, we denote by $\bar{f}\in S$ the image of $f$ via this map. 
Let $$\overline{\mathcal{F}}=\{\overline{f} : f\in\mathcal{F}\}$$ be the {\bf fake Weil descent system} of $\mathcal{F}$. Let $$\overline{\mathcal{F}}_f=\overline{\mathcal{F}}\cup\{X^q_0-X_1,\ldots,X^q_{n-2}-X_{n-1},X^q_{n-1}-X_0\}.$$
\end{Definition}

\begin{remark}\label{degrees}
One has $\deg(\overline{f})=\max\{\deg([f]_j) : j=0,\ldots,n-1\}$.
\end{remark}

\begin{remark}\label{RelateWeil}
\cite[Proposition~4.1]{HKYY} relates the last fall degree of the two types of Weil descents by showing that
$$\max\{d_{\mathcal{F}'_f},q,\deg(\mathcal{F}')\}\leq\max\{d_{\overline{\mathcal{F}}_f},q,\deg(\mathcal{F}')\}.$$
We are interested in finding an upper bound for the last fall degree and therefore mostly work with the system $\overline{\mathcal{F}}_f$.
\end{remark}

%Mention of results previous to HKYY, their limitations/ heuristics?

\section{Relating the solving degree and last fall degree}\label{sec:RelateDegrees}

In this section, we clarify the relationship between solving degree and last fall degree, for degree-compatible term orders.

\begin{Theorem}\label{mm}
Let $\mathcal{E}\subset S$ be a finite set of polynomials and let $\sigma$ be a degree-compatible term order. Let $W_d$ be the vector space constructed as in Notation~\ref{wd}. Then $V_{\mathcal{E},d}=W_d$ for all $d\geq0$. Moreover $$\sd_{\sigma}(\mathcal{E})\geq d_{\mathcal{E}}.$$
\end{Theorem}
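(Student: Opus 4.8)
The plan is to prove the theorem in two stages: first establish the identity $V_{\mathcal{E},d}=W_d$ for every $d\geq0$, and then deduce the inequality $\sd_\sigma(\mathcal{E})\geq d_\mathcal{E}$ from it, using the standard division theory of \groebner bases.

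For the identity I would argue by double inclusion. The inclusion $W_d\subseteq V_{\mathcal{E},d}$ is the easy one: I would track the algorithm of Notation~\ref{wd} step by step and show that every row it ever produces lies in $V_{\mathcal{E},d}$. Indeed, each initial row $uf$ has $f\in\mathcal{E}\cap S_{\leq d}\subseteq V_{\mathcal{E},d}$ and $\deg(uf)\leq d$, so $uf\in V_{\mathcal{E},d}$ by the multiplicative closure in the definition of $V_{\mathcal{E},d}$; Gaussian elimination only forms $k$-linear combinations, which stay in the vector space $V_{\mathcal{E},d}$; and every newly added row $uf_r$ has $f_r$ in the span of previous rows (hence in $V_{\mathcal{E},d}$ by induction) and $\deg(uf_r)\leq d$, so again $uf_r\in V_{\mathcal{E},d}$.

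The reverse inclusion $V_{\mathcal{E},d}\subseteq W_d$ is where the real work lies, and I expect it to be the main obstacle. By minimality of $V_{\mathcal{E},d}$ it suffices to verify that $W_d$ satisfies the two defining properties. The containment $\mathcal{E}\cap S_{\leq d}\subseteq W_d$ is immediate, by taking $u=1$. The difficulty is the multiplicative closure: I must show that if $g\in W_d$ and $\deg(gh)\leq d$ then $gh\in W_d$. I would first reduce to the case where $h=u$ is a single monomial, since $\deg(gu_k)\leq\deg(gh)\leq d$ for each monomial $u_k$ appearing in $h$. Writing $g=\sum_j\beta_j r_j$ in terms of the final reduced rows $r_j$, the key point is that \emph{degree-compatibility} of $\sigma$ forces $\deg(r_j)\leq\deg(g)$ for every $r_j$ with $\beta_j\neq0$: in reduced row echelon form the leading monomials of the $r_j$ are distinct, so $\mathrm{LM}_\sigma(g)$ equals the $\sigma$-largest of them, and a degree-compatible order then bounds the degree of each such $r_j$ by $\deg(g)$. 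Consequently $\deg(r_j u)\leq\deg(gu)\leq d$, and here I would invoke the \emph{termination} of the algorithm: for a reduced row $r_j$ of degree $<d$, every product $r_j u$ of degree $\leq d$ already lies in $\rowsp(M)=W_d$ (otherwise it would have been added), while for $\deg(r_j)=d$ only $u=1$ is possible. Summing, $gu=\sum_j\beta_j(r_j u)\in W_d$, which completes the closure check and hence the inclusion.

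Finally, for $\sd_\sigma(\mathcal{E})\geq d_\mathcal{E}$, set $D=\sd_\sigma(\mathcal{E})$, so that $W_D=V_{\mathcal{E},D}$ contains a \groebner basis $G$ of $I$ with respect to $\sigma$. Given any nonzero $f\in I$, the multivariate division of $f$ by $G$ leaves remainder $0$ and yields a standard representation $f=\sum_i h_i g_i$ with $\mathrm{multideg}_\sigma(h_i g_i)\leq\mathrm{multideg}_\sigma(f)$; since $\sigma$ is degree-compatible this gives $\deg(h_i g_i)\leq\deg(f)$. Setting $N=\max\{D,\deg(f)\}$, I would use monotonicity $V_{\mathcal{E},D}\subseteq V_{\mathcal{E},N}$ (which follows directly from the minimality defining $V_{\mathcal{E},D}$) to get $g_i\in V_{\mathcal{E},N}$, and then multiplicative closure to get $h_i g_i\in V_{\mathcal{E},N}$ because $\deg(h_i g_i)\leq N$. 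Hence $f=\sum_i h_i g_i\in V_{\mathcal{E},N}=V_{\mathcal{E},\max\{D,\deg(f)\}}$, which is exactly the condition defining the last fall degree; therefore $d_\mathcal{E}\leq D=\sd_\sigma(\mathcal{E})$.
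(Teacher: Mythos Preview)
Your proposal is correct and follows essentially the same approach as the paper's proof: both establish $V_{\mathcal{E},d}=W_d$ by checking that $W_d$ satisfies the defining closure properties of $V_{\mathcal{E},d}$ (reducing to monomial multipliers, writing $g$ in the reduced-row basis, using degree-compatibility to bound the row degrees, and invoking termination), and both derive $\sd_\sigma(\mathcal{E})\geq d_\mathcal{E}$ from a standard \groebner representation with $\deg(h_ig_i)\leq\deg(f)$. Your write-up is slightly more explicit in places (e.g., the case $\deg(r_j)=d$ and the monotonicity $V_{\mathcal{E},D}\subseteq V_{\mathcal{E},N}$), but the argument is the same.
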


\begin{proof}
Since $W_d\supseteq \mathcal{E}\cap S_{\leq d}$ and the operations performed by the algorithm described in Section~\ref{solvdeg} only involve polynomials of degree at most $d$, 
by definition $W_d\subseteq V_{\mathcal{E},d}$. 
We now prove the reverse inclusion. By Definition~\ref{LastFallDegree} and since 
$W_d\supseteq \mathcal{E}\cap S_{\leq d}$, it suffices to show that if $g\in W_d$ and $h\in S$ with 
$\deg(gh)\leq d$, then $gh\in W_d$. We may assume without loss of generality that $h$ is a monomial.
The rows of the matrix in reduced row echelon form, say $M$, produced by the algorithm described in Section~\ref{solvdeg} are a basis of $W_d$ by definition. Since $\sigma$ is degree compatible and no cancellation among leading terms is possible, then the rows corresponding to polynomials of degree smaller than $d$ are a basis of $W_d\cap S_{<d}$. Let $g\in W_d\cap S_{<d}$, let $h\in S_{\leq d-\deg(g)}$ be a monomial. Then $g$ is a linear combination of some rows of $M$, say $r_1,\ldots,r_\ell$. Since the algorithm terminated, then $hr_i\in W_d$ for each $i=1,\ldots,\ell$, 
hence $hg\in W_d$.

In order to show that $\sd_{\sigma}(\mathcal{E})\geq d_{\mathcal{E}}$, it suffices to show that for all $f\in(\mathcal{E})$ one has $f\in V_{\mathcal{E},\max\{\sd_{\sigma}(\mathcal{E}),\deg(f)\}}$. Let $g_1\ldots,g_\ell$ be a \groebner basis of $(\mathcal{E})$ with respect to $\sigma$. Then $$f=\sum_{i=0}^\ell h_ig_i,$$ with $\deg(h_i)+\deg(g_i)\leq \deg(f)$ for all $i$. By definition of solving degree and since $V_{\mathcal{E},d}=W_d$ for all $d$, then $g_i\in V_{\mathcal{E},\sd_{\sigma}(\mathcal{E})}$ for all $i$. Therefore, $f\in V_{\mathcal{E},\max\{\sd_{\sigma}(\mathcal{E}),\deg(f)\}}$.
\end{proof}

Notice that it is possible that $\sd_{\sigma}(\mathcal{E})>d_{\mathcal{E}}$.

\begin{example}
Let $\mathcal{E}=\{g\}$ consist of a single polynomial. Then $\sd_{\sigma}(\mathcal{E})=\deg(g)$ for any term order $\sigma$, but $d_{\mathcal{E}}=0$. In fact, for any $f\in S$ one has $fg\in V_{\max\{0,\deg(fg)\}}=V_{\deg(fg)}$.
\end{example}

Moreover, the conclusion of Theorem~\ref{mm} is false in general for term orders which are not degree-compatible.

\begin{example}
Let $d\geq 2$ be an integer. Let $\mathcal{E}=\{X_0-X_0X_2^{d-1},X_1-X_2^d\}$ and let $\sigma$ be the lexicographic order on $k[X_0,X_1,X_2]$ with $X_0>X_1>X_2$. The elements of $\mathcal{E}$ are a \groebner basis of the ideal that they generate, since their leading terms are coprime. Therefore, $\sd_{\sigma}(\mathcal{E})=d$. Let $$f=X_0X_2-X_0X_1=X_2(X_0-X_0X_2^{d-1})-X_0(X_1-X_2^d)\in V_{d+1}.$$ Since $f\not\in V_d=\langle X_0-X_0X_2^{d-1},X_1-X_2^d \rangle$, then $$d_{\mathcal{E}}\geq d+1>\sd_{\sigma}(\mathcal{E}).$$
\end{example}

\section{Solving degree and coordinate changes}\label{sec:coord}

One difficulty in estimating the solving degree comes from the fact that the degrees of the elements of a reduced \groebner basis of $I$ may vary with the term order. While many results from commutative algebra allow us to provide estimates for the solving degree of polynomial systems in generic coordinates (see e.g.~\cite[Section~3.3]{CG}), results that hold for a given (non generic) system of coordinates are often much harder to prove and require ad-hoc arguments. With this in mind, in this section we find a sufficient condition for the solving degree to be independent of coordinate changes. It turns out that it suffices to assume that the system has a unique solution (of multiplicity one) over the algebraic closure.

Let $S=k[X_0,...,X_{n-1}]$, let $K\supseteq k$ be a field extension and let $\varphi\in\affn(K)$ be a change of coordinates over $K$, where $\affn(K)=K^n\rtimes\gln(K)$ is the affine group of degree $n$ over the field $K$. $\affn(K)$ is isomorphic to the group of maps $\{x\rightarrow Ax+b\}$ where $A$ is an invertible $n\times n$ matrix (over $K$) and $b$ is an element of $K^n$. The goal of this section is showing that the solving degree with respect to a degree-compatible term order does not depend on the system of coordinates, for systems $\mathcal{E}$ which have a simple zero.

\begin{Definition}
We say that $\mathcal{E}\subset S$ has {\bf a single solution of multiplicity one} over the algebraic closure, or {\bf a simple zero}, if $I=(\mathcal{E})$ is radical and $\mathcal{E}$ has exactly one solution over the algebraic closure $\overline{k}$ of $k$.
\end{Definition}

We concentrate on $I$ not homogeneous, since all the results that we prove are trivial in the homogeneous case. Then $\mathcal{E}$ has a simple zero if and only if the zero locus of $I$ over $\overline{k}$ is a point $(a_0,\ldots,a_{n-1})\in k^n$ and $I=(X_0-a_0,\ldots,X_{n-1}-a_{n-1})$. Equivalently, $\mathcal{E}$ has a simple zero if and only if $I$ contains $n$ linearly independent linear forms. Moreover, $I$ contains $n$ linearly independent linear forms if and only if $I^h=(f^h\mid f\in I)$ is generated by linear forms. %Notice that, in general, $\tilde{I}=(f^h\mid f\in\mathcal{E})$ is not generated by linear forms, even when $I^h$ is.

\begin{Theorem}\label{deg1}
Assume that $\mathcal{E}$ has a simple zero, say $(a_0,\ldots,a_{n-1})\in k^n$, and that $V_{\mathcal{E},d}$ contains $n$ linearly independent linear forms. Then they can be obtained by computing the reduced row echelon form of the Macaulay matrix $M$ of $\mathcal{E}$ in degree $d$ with respect to any degree-compatible term order. 

In particular, $X_0-a_0,\ldots,X_{n-1}-a_{n-1}$ is the reduced \groebner basis of $I$ with respect to any term order and it may be obtained by running the algorithm described in Section~\ref{solvdeg} in degree $d$ with respect to any degree-compatible term order.
\end{Theorem}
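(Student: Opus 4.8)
The plan is to prove the two assertions in order, using Theorem~\ref{mm} to pass freely between the algorithmic object $W_d$ and the intrinsic vector space $V_{\mathcal{E},d}$. The first claim is that if $V_{\mathcal{E},d}$ contains $n$ linearly independent linear forms, then these forms are recovered by the reduced row echelon computation in degree $d$. By Theorem~\ref{mm}, for any degree-compatible term order $\sigma$ we have $V_{\mathcal{E},d}=W_d$, so $W_d$ contains $n$ linearly independent linear forms. The key observation is that a linear form has degree $1\leq d$, so its coefficient vector lives in the rows of $M$ indexed by monomials of degree $\leq 1$. Because $\sigma$ is degree-compatible and Gaussian elimination produces a matrix in reduced row echelon form, I would argue that the rows of degree $\leq 1$ in the reduced form span exactly $W_d\cap S_{\leq 1}$; this follows from the no-cancellation-among-leading-terms observation already used in the proof of Theorem~\ref{mm}, restricted to the degree filtration. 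Hence the $n$-dimensional space of linear forms in $W_d$ is precisely spanned by those rows of the reduced row echelon form whose polynomials have degree $\leq 1$, and these can be read off directly from the computation.

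The second step is to identify these $n$ linear forms with $X_0-a_0,\ldots,X_{n-1}-a_{n-1}$. Here I would invoke the characterization given just before the theorem statement: since $\mathcal{E}$ has a simple zero at $(a_0,\ldots,a_{n-1})\in k^n$, the ideal $I$ equals $(X_0-a_0,\ldots,X_{n-1}-a_{n-1})$, and this ideal contains exactly one $n$-dimensional space of linear forms, namely $\langle X_0-a_0,\ldots,X_{n-1}-a_{n-1}\rangle$ together with their $k$-linear combinations. Any $n$ linearly independent linear forms in $I$ must therefore span this same space. Reducing these $n$ independent linear forms to reduced row echelon form with respect to the variable ordering forces the leading terms to be $X_0,\ldots,X_{n-1}$ (in whatever order $\sigma$ dictates), and the reduced form of an $n$-dimensional space of affine-linear forms generating $I$ is uniquely $X_0-a_0,\ldots,X_{n-1}-a_{n-1}$.

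Finally, I must check that $\{X_i-a_i\}$ is the reduced \groebner basis of $I$ with respect to \emph{any} term order. This is immediate and term-order-independent: the leading term of $X_i-a_i$ is $X_i$ for every term order (since $X_i>1$ always), the leading terms $X_0,\ldots,X_{n-1}$ are pairwise coprime, so by Buchberger's criterion the $S$-polynomials reduce to zero and the set is a \groebner basis; it is reduced because no leading term $X_i$ divides any term of $X_j-a_j$ for $j\neq i$, and the only monomials appearing are $X_i$ and the constant $1$. Combining with the first two steps, running the algorithm of Section~\ref{solvdeg} in degree $d$ with respect to a degree-compatible $\sigma$ produces rows whose degree-$\leq 1$ part is exactly this reduced \groebner basis.

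I expect the main obstacle to be the first step: rigorously arguing that the linear forms contained in $W_d$ appear as rows of the reduced row echelon matrix of degree $\leq 1$, rather than only being \emph{expressible} as combinations of higher-degree rows. The degree-compatibility of $\sigma$ is essential here, as the counterexamples following Theorem~\ref{mm} show that for non-degree-compatible orders the relationship between the filtration by degree and the row echelon structure breaks down. I would make this precise by noting that the reduced row echelon form respects the degree filtration when $\sigma$ is degree-compatible, so that $\dim_k(W_d\cap S_{\leq 1})$ equals the number of rows of degree $\leq 1$, and these rows therefore furnish an explicit basis of the space of linear forms.
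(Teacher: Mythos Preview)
Your proposal is correct and follows essentially the same approach as the paper. Both arguments hinge on Theorem~\ref{mm} to identify $V_{\mathcal{E},d}$ with $W_d$, then use degree-compatibility of $\sigma$ together with the no-cancellation property of reduced row echelon form to conclude that the linear forms in $W_d$ must appear explicitly as rows; the paper simply packages this more tersely by starting from the specific polynomials $X_i-a_i$ and arguing directly that each must occur as a row, whereas you first show abstractly that the degree-$\leq 1$ rows span $W_d\cap S_{\leq 1}$ and then identify this space. Your additional verification via Buchberger's criterion that $\{X_i-a_i\}$ is a reduced \groebner basis for every term order is a detail the paper leaves implicit.
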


\begin{proof}
By assumption $X_0-a_0,\ldots,X_{n-1}-a_{n-1}\in V_{\mathcal{E},d}$. Assume now that we have brought the Macaulay matrix  in degree $d$ in reduced row echelon form. By Theorem~\ref{mm}, $X_i-a_i$ is a linear combination of the rows of the matrix for all $i$. Such a combination cannot involve rows whose leading term is strictly larger than $X_i$. In fact, no cancellation is possible among leading terms of the rows, since the matrix is in reduced row echelon form. In addition, there must be a row that has leading term $X_i$ and this holds for all $i$. Since the term order is degree compatible, each of these rows corresponds to a linear form. Therefore, the last $n$ nonzero rows of the matrix in reduced row echelon form are the polynomials $X_0-a_0,\ldots,X_{n-1}-a_{n-1}$.
\end{proof}

\begin{Lemma}[\cite{HKYY}, Proposition~2.3.iv]\label{gF}
Let $K\supseteq k$ be a field extension and let $\varphi\in\affn(K)$. Then 
$$\varphi(V_{\mathcal{E},d})=V_{\varphi(\mathcal{E}),d}$$ for all $d\in\NN$.
\end{Lemma}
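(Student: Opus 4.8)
The plan is to prove the identity $\varphi(V_{\mathcal{E},d}) = V_{\varphi(\mathcal{E}),d}$ by exploiting the fact that $\varphi \in \affn(K)$ acts as a degree-preserving $K$-algebra automorphism on $S \otimes_k K$. The crucial observation is that $V_{\mathcal{E},d}$ is characterized intrinsically (independent of any term order) as the smallest $k$-vector space containing $\mathcal{E} \cap S_{\leq d}$ that is closed under multiplication by arbitrary polynomials $h \in S$ whenever the product stays in degree $\leq d$. Since $\varphi$ is linear on coordinates of the form $x \mapsto Ax + b$, it sends a polynomial of degree $e$ to a polynomial of degree exactly $e$, so it preserves the filtration by degree: $\varphi(S_{\leq d}) = S_{\leq d}$ (after base change to $K$). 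This is the structural feature that makes the whole argument go through.

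First I would establish the two ingredients about $\varphi$ that I will use repeatedly: that $\varphi$ is a $K$-linear (indeed $K$-algebra) isomorphism of $S \otimes_k K$ onto itself, and that $\deg(\varphi(f)) = \deg(f)$ for every $f$. The latter follows because the components of $\varphi$ are affine-linear and $A$ is invertible, so the top-degree part of $f$ is sent to a nonzero polynomial of the same degree. Then I would prove the inclusion $\varphi(V_{\mathcal{E},d}) \subseteq V_{\varphi(\mathcal{E}),d}$ by verifying that $\varphi(V_{\mathcal{E},d})$ is a $K$-vector space (hence in particular a $k$-vector space) satisfying the two defining closure properties of $V_{\varphi(\mathcal{E}),d}$. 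Concretely, $\varphi$ sends $\mathcal{E} \cap S_{\leq d}$ onto $\varphi(\mathcal{E}) \cap S_{\leq d}$ because $\varphi$ preserves degree, and if $g \in V_{\mathcal{E},d}$ and $\deg(\varphi(g)\cdot h) \leq d$ for some $h \in S$, then writing $h = \varphi(\varphi^{-1}(h))$ and using $\deg(g \cdot \varphi^{-1}(h)) = \deg(\varphi(g) \cdot h) \leq d$ gives $g \cdot \varphi^{-1}(h) \in V_{\mathcal{E},d}$, so $\varphi(g)\cdot h = \varphi(g\cdot\varphi^{-1}(h)) \in \varphi(V_{\mathcal{E},d})$. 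By minimality of $V_{\varphi(\mathcal{E}),d}$, this yields the inclusion.

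The reverse inclusion then comes for free by symmetry: applying the inclusion just proved to $\varphi^{-1} \in \affn(K)$ and the system $\varphi(\mathcal{E})$ gives $\varphi^{-1}(V_{\varphi(\mathcal{E}),d}) \subseteq V_{\varphi^{-1}(\varphi(\mathcal{E})),d} = V_{\mathcal{E},d}$, and applying $\varphi$ to both sides yields $V_{\varphi(\mathcal{E}),d} \subseteq \varphi(V_{\mathcal{E},d})$. Combining the two inclusions gives equality.

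The one subtlety I expect to be the main obstacle, and which I would address carefully, is the interplay between the base field $k$ and the extension $K$: the spaces $V_{\mathcal{E},d}$ are defined as $k$-vector spaces inside $S = k[X_0,\ldots,X_{n-1}]$, whereas $\varphi$ is only defined over $K$ and moves one out of $S$ into $S \otimes_k K$. The clean fix is to interpret $V_{\mathcal{E},d}$ over the appropriate ring throughout, i.e. to work with $V_{\mathcal{E},d} \otimes_k K$ inside $S \otimes_k K$ on the left and $V_{\varphi(\mathcal{E}),d}$ (a $K$-vector space, since $\varphi(\mathcal{E})$ has coefficients in $K$) on the right, and to check that base change commutes with the construction of $V_{\mathcal{E},d}$ — this holds because the defining closure conditions are stable under extension of scalars and the degree filtration is compatible with $\otimes_k K$. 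Once this bookkeeping is set up correctly, the closure-property verification is routine, and the symmetry argument closes the proof.
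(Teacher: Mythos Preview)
The paper does not prove this lemma; it is stated with a citation to \cite[Proposition~2.3.iv]{HKYY} and used as a black box. Your argument is the natural one and is essentially correct: $\varphi$ is a degree-preserving $K$-algebra automorphism, so the defining closure conditions of $V_{\cdot,d}$ are transported by $\varphi$, and symmetry via $\varphi^{-1}$ gives the other inclusion. The discussion of the $k$ versus $K$ base-change issue is also appropriate.

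There is, however, a small but genuine slip in the direction of your first inclusion. You write that you will prove $\varphi(V_{\mathcal{E},d}) \subseteq V_{\varphi(\mathcal{E}),d}$ by checking that $\varphi(V_{\mathcal{E},d})$ satisfies the two closure properties defining $V_{\varphi(\mathcal{E}),d}$. But minimality of $V_{\varphi(\mathcal{E}),d}$ among such spaces gives the \emph{opposite} containment, namely $V_{\varphi(\mathcal{E}),d} \subseteq \varphi(V_{\mathcal{E},d})$. Your symmetry step, applied correctly, then yields $\varphi(V_{\mathcal{E},d}) \subseteq V_{\varphi(\mathcal{E}),d}$, so in the end both inclusions are established and the equality holds --- but the two halves of your write-up have their labels swapped. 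Alternatively, to obtain $\varphi(V_{\mathcal{E},d}) \subseteq V_{\varphi(\mathcal{E}),d}$ directly, you should verify that $\varphi^{-1}(V_{\varphi(\mathcal{E}),d})$ satisfies the closure conditions for $V_{\mathcal{E},d}$ and then invoke minimality of $V_{\mathcal{E},d}$. Either way the proof goes through; just be sure the direction of the inclusion matches the minimality argument you are invoking.
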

%\begin{proof}
%This follows directly from the definition of $V_{\mathcal{E},d}$, since for every $f\in S$ one has $\deg(\varphi(f))=\deg(f)$.
%\end{proof}

\begin{Theorem}\label{coordchange}
Let $K\supseteq k$ be a field extension. Assume that $\mathcal{E}\subset k[X_0,\ldots,X_{n-1}]$ has a simple zero and let $\sigma$ be a degree-compatible term order. 
Then $$\sd_{\sigma}(\mathcal{E})=\min\{e\in\NN\mid \dim(V_{\mathcal{E},e}\cap S_{\leq 1})\geq n\}.$$ 
In addition, the solving degree of $\mathcal{E}$ does not depend on the choice of a coordinate change defined over $K$, nor on the choice of a degree-compatible term order.
\end{Theorem}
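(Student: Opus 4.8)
The plan is to establish the claimed formula $\sd_{\sigma}(\mathcal{E})=\min\{e\in\NN\mid \dim(V_{\mathcal{E},e}\cap S_{\leq 1})\geq n\}$ first, and then deduce the invariance statements from it. Call the right-hand side $e_0$. First I would observe that since $\mathcal{E}$ has a simple zero at $(a_0,\ldots,a_{n-1})$, the ideal $I$ equals $(X_0-a_0,\ldots,X_{n-1}-a_{n-1})$, so its reduced \groebner basis with respect to any term order is exactly these $n$ linear forms. Hence $\sd_{\sigma}(\mathcal{E})$ is the least $d$ such that $W_d$ contains these $n$ linear forms, and by Theorem~\ref{mm} we have $W_d=V_{\mathcal{E},d}$ for every degree-compatible $\sigma$.

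Next I would show $\sd_{\sigma}(\mathcal{E})=e_0$ by two inequalities. For $\sd_{\sigma}(\mathcal{E})\geq e_0$: if $W_d=V_{\mathcal{E},d}$ contains the \groebner basis, then it contains $n$ linearly independent linear forms, so $\dim(V_{\mathcal{E},d}\cap S_{\leq 1})\geq n$, whence $d\geq e_0$. For $\sd_{\sigma}(\mathcal{E})\leq e_0$: at $d=e_0$ the space $V_{\mathcal{E},e_0}$ contains $n$ linearly independent linear forms by definition of $e_0$, so Theorem~\ref{deg1} applies and guarantees that running the algorithm in degree $e_0$ with respect to $\sigma$ recovers the reduced \groebner basis $X_0-a_0,\ldots,X_{n-1}-a_{n-1}$; thus $W_{e_0}$ contains a \groebner basis and $\sd_{\sigma}(\mathcal{E})\leq e_0$. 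Together these give equality, and crucially the right-hand side $e_0$ is manifestly independent of $\sigma$, which already establishes independence from the choice of degree-compatible term order.

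For independence under a coordinate change $\varphi\in\affn(K)$, I would pass to $\varphi(\mathcal{E})$ and show its solving degree equals $e_0$ as well. The key tool is Lemma~\ref{gF}, which gives $\varphi(V_{\mathcal{E},d})=V_{\varphi(\mathcal{E}),d}$ for all $d$. Since $\varphi$ is an affine change of coordinates, it maps $S_{\leq 1}$ bijectively onto $S_{\leq 1}$ (affine maps send degree-one polynomials to degree-one polynomials and are invertible), so $\varphi$ restricts to an isomorphism $V_{\mathcal{E},d}\cap S_{\leq 1}\xrightarrow{\sim}V_{\varphi(\mathcal{E}),d}\cap S_{\leq 1}$, preserving dimension. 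Therefore the minimizing degree $e_0$ is the same for $\mathcal{E}$ and for $\varphi(\mathcal{E})$. I would also note that $\varphi(\mathcal{E})$ still has a simple zero, namely the image of $(a_0,\ldots,a_{n-1})$ under $\varphi^{-1}$, so the formula applies to it, giving $\sd_{\sigma}(\varphi(\mathcal{E}))=e_0=\sd_{\sigma}(\mathcal{E})$.

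The main obstacle I anticipate is the point about $\varphi$ preserving the space of linear forms: one must be careful that an affine (not merely linear) map sends a degree-one polynomial $X_i-a_i$ to another polynomial of degree at most one, and that the constant shift $b$ does not create higher-degree terms. This is straightforward but is the place where the hypothesis $\varphi\in\affn(K)=K^n\rtimes\gln(K)$ is genuinely used, as opposed to an arbitrary polynomial automorphism, which could raise degrees and break the bijection on $S_{\leq 1}$. The remaining care is to confirm that Theorem~\ref{deg1} may be invoked at exactly the minimizing degree $e_0$, which it can, since its hypothesis is precisely that $V_{\mathcal{E},d}$ contains $n$ linearly independent linear forms.
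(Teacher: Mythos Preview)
Your proposal is correct and follows essentially the same approach as the paper: define the quantity $\varepsilon(\mathcal{E})=\min\{e\mid \dim(V_{\mathcal{E},e}\cap S_{\leq 1})\geq n\}$, identify it with $\sd_{\sigma}(\mathcal{E})$ via Theorem~\ref{mm} and Theorem~\ref{deg1}, and then use Lemma~\ref{gF} together with the fact that affine maps preserve $S_{\leq 1}$ to conclude invariance. Your version is in fact slightly more explicit than the paper's (you spell out both inequalities and note that $\varphi(\mathcal{E})$ again has a simple zero), but the argument is the same.
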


\begin{proof}
Let $\varphi\in\affn(K)$. Since $\mathcal{E}$ has a simple zero over $\overline{k}$, then $I=(X_0-a_0,\ldots,X_{n-1}-a_{n-1})$ for some $a_0,\ldots,a_{n-1}\in k$. Let $\varepsilon(\mathcal{E})$ be the smallest integer $e$ such that $V_{\mathcal{E},e}$ contains $n$ linearly independent linear forms. 
Hence $\varepsilon(\mathcal{E})=\sd_{\sigma}(\mathcal{E})$, by definition of solving degree, Theorem~\ref{mm}, and Theorem~\ref{deg1}. 
By Lemma~\ref{gF}, $\varepsilon(\mathcal{E})=\varepsilon(\varphi(\mathcal{E}))$. Combining all equalities one gets 
$$\sd_{\sigma}(\mathcal{E})=\varepsilon(\mathcal{E})=\varepsilon(\varphi(\mathcal{E}))=\sd_{\tau}(\varphi(\mathcal{E}))$$
for any $\sigma,\tau$ degree-compatible term orders.
\end{proof}

\begin{remark}
Notice that, in general, $\tilde{I}=(f^h\mid f\in\mathcal{E})$ is not generated by linear forms, hence it does not have a simple zero, even under the assumption that $I=(f\mid f\in\mathcal{E})$ does. In particular, the result of Theorem~\ref{coordchange} by itself is not sufficient to conclude that the assumptions of~\cite[Theorem~3.23]{CG} are satisfied and hence conclude that $\sd(I)\leq\reg(\tilde{I})$. Nevertheless, in Section~\ref{sec:HFE} we bypass this problem and prove directly that the systems that interest us have $\tilde{I}$ in generic coordinates, which allows us to apply~\cite[Theorem~3.23]{CG} to bound their solving degree.
\end{remark}

Notice that, when $\mathcal{E}$ does not have a simple zero, the solving degree may depend on the choice of a system of coordinates.

\begin{example}
Let $\mathcal{E}=\{X_0^2, X_1^2\}\subseteq S=\FF_3[X_0,X_1]$ and let $\sigma$ be any term order with $X_0>X_1$. Let $\varphi(X_0)=X_0$ and $\varphi(X_1)=X_0+X_1$. Then 
$$\varphi(\mathcal{E})=\{X_0^2,X_0^2-X_0X_1+X_1^2\}$$ and $$X_1^3=-X_0\cdot X_0^2+(X_0+X_1)(X_0^2-X_0X_1+X_1^2)\in (\varphi(\mathcal{E})).$$
It is easy to check that the reduced \groebner basis of $(\varphi(\mathcal{E}))$ with respect to $\sigma$ is 
$$\left\{X_0^2,X_0X_1-X_1^2,X_1^3\right\},$$ therefore $$\sd_{\sigma}(\mathcal{E})=2<3=\sd_{\sigma}(\varphi(\mathcal{E})).$$
\end{example}

\section{A simple bound for the solving degree of HFE}\label{sec:HFE}

In this section we provide a simple bound for the solving degree of systems of the form $\overline{\mathcal{F}}_f$. These bounds apply in particular to the fake Weil descent system of the basic version of HFE, as proposed in~\cite{P}. Experimental evidence that the solving degree of the Weil descent system of HFE is smaller than that of generic systems was obtained in~\cite{FJ}. A provable bound for the solving degree can be obtained using the techniques from~\cite{CG}. The next theorem is inspired by the proof of~\cite[Theorem~3.26]{CG}.

\begin{Theorem}
Let $k$ be a finite field of cardinality $q^n$ and let $\mathcal{F}=\{f\}\subset R$, $d=\deg(f)$. 
Let $DRL$ denote the Degree Reverse Lexicographic term order. 
Then $$\sd_{DRL}(\overline{\mathcal{F}}_f)\leq\deg(\overline{f})+(q-1)n\leq(q-1)(\lfloor\log_q(d)+1\rfloor+n).$$
In particular, if
$$f=\sum_{i,j}\beta_{i,j}X^{q^{\theta_{ij}}+q^{\varphi_{ij}}}+\sum_\ell \alpha_\ell X^{q^{\zeta_\ell}}+\mu\in k[X],$$ 
then $\deg(\overline{f})\leq 2$, hence
$$\sd_{DRL}(\overline{\mathcal{F}}_f)\leq (q-1)n+2.$$
\end{Theorem}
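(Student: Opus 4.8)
The plan is to bound $\sd_{DRL}(\overline{\mathcal{F}}_f)$ by exhibiting a degree in which the algorithm of Section~\ref{solvdeg} already produces enough structure to reach the reduced \groebner basis. Since the theorem is inspired by~\cite[Theorem~3.26]{CG}, I would first analyze the system $\overline{\mathcal{F}}_f=\{\overline{f}\}\cup\{X_0^q-X_1,\ldots,X_{n-2}^q-X_{n-1},X_{n-1}^q-X_0\}$ and estimate the degrees of its generators. By Remark~\ref{degrees}, $\deg(\overline{f})=\max_j\deg([f]_j)$, and the field-type equations each have degree $q$. The key observation is that the ideal $I=(\overline{\mathcal{F}}_f)$ has a simple zero over $\overline{k}$ in the relevant situation (the Weil descent of a single equation over a finite field typically cuts out finitely many points), so one expects the reduced \groebner basis to consist of $n$ linear forms $X_i-a_i$. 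This is exactly the setting of Theorem~\ref{deg1} and Theorem~\ref{coordchange}, which reduce the computation of the solving degree to finding the smallest degree $e$ in which $V_{\overline{\mathcal{F}}_f,e}$ contains $n$ linearly independent linear forms.

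Concretely, I would first establish the numerical inequalities for the degree of $\overline{f}$. Writing $e'=\sum_j e'_j q^j$ for the base-$q$ reduction of an exponent $e\leq d$ modulo $X^{q^n}-X$, the degree of the monomial $\overline{X^e}$ is $\sum_j e'_j$, which is bounded by $(q-1)$ times the number of nonzero base-$q$ digits. Since any exponent reduces to one at most $d$ in value, the number of base-$q$ digits is at most $\lfloor\log_q(d)\rfloor+1$, giving $\deg(\overline{f})\leq(q-1)(\lfloor\log_q(d)+1\rfloor)$. The term $(q-1)n$ then accounts for the degree needed to reach the linear forms using the field equations $X_i^q-X_{i+1}$: these allow one to rewrite high powers of each variable, and iterating the substitution adds controlled degree. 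This yields the first chain of inequalities $\sd_{DRL}(\overline{\mathcal{F}}_f)\leq\deg(\overline{f})+(q-1)n\leq(q-1)(\lfloor\log_q(d)+1\rfloor+n)$.

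For the second part, I would specialize to the HFE-shaped $f=\sum_{i,j}\beta_{i,j}X^{q^{\theta_{ij}}+q^{\varphi_{ij}}}+\sum_\ell\alpha_\ell X^{q^{\zeta_\ell}}+\mu$. Every monomial of $f$ has an exponent that is either a sum of two powers $q^{\theta}+q^{\varphi}$, a single power $q^{\zeta}$, or zero. Reducing such an exponent modulo $X^{q^n}-X$ gives an exponent whose base-$q$ expansion has at most two nonzero digits, each equal to $1$ (when $\theta\not\equiv\varphi$) or a single digit equal to $2$ (when $\theta\equiv\varphi\pmod n$). In every case $\sum_j e'_j\leq 2$, so $\deg(\overline{X^e})\leq 2$ and hence $\deg(\overline{f})\leq 2$. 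Substituting into the first bound with $\deg(\overline{f})\leq 2$ gives $\sd_{DRL}(\overline{\mathcal{F}}_f)\leq(q-1)n+2$.

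The main obstacle I anticipate is justifying the term $(q-1)n$ rigorously, that is, proving that once the generators are available the algorithm actually produces $n$ independent linear forms by degree $\deg(\overline{f})+(q-1)n$, rather than merely arguing heuristically. This requires tracking how the field equations $X_i^q-X_{i+1}$ interact with $\overline{f}$ under the degree-bounded ideal operations defining $V_{\overline{\mathcal{F}}_f,d}$, and invoking Theorem~\ref{deg1} to conclude that reaching the linear forms in $V_{\overline{\mathcal{F}}_f,d}$ suffices to recover the full reduced \groebner basis in that degree. The delicate point is controlling the degree growth in the chain of reductions that extracts the linear forms, which is where the factor $(q-1)n$ comes from, and confirming that this argument goes through for the $DRL$ order specifically.
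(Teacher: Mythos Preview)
Your proposal has a genuine gap, and it is precisely the one you flag yourself as the ``main obstacle'': you have no mechanism to show that $V_{\overline{\mathcal{F}}_f,\deg(\overline{f})+(q-1)n}$ already contains $n$ independent linear forms. Without this, the appeal to Theorem~\ref{deg1} and Theorem~\ref{coordchange} never gets off the ground. There is also a more basic problem: those theorems require that $\overline{\mathcal{F}}_f$ have a \emph{simple zero}, and the statement you are trying to prove makes no such hypothesis. The polynomial $f$ may have several roots in $k$, or none, so the ideal $(\overline{\mathcal{F}}_f)$ need not be of the form $(X_0-a_0,\ldots,X_{n-1}-a_{n-1})$ at all. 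The paper itself warns (in the remark just before Section~\ref{sec:HFE}) that Theorem~\ref{coordchange} alone is not enough to feed into~\cite[Theorem~3.23]{CG}, and that one must ``bypass this problem''.

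The paper's proof takes a completely different route that avoids both difficulties. It homogenizes the system with respect to a new variable $X_n$, and observes that setting $X_n=0$ in the homogenized field equations $X_i^q-X_{i+1}X_n^{q-1}$ forces $X_0=\cdots=X_{n-1}=0$; hence the projective zero locus of $J=(\overline{\mathcal{F}}_f^{\,h})$ has no point at infinity, so $X_n$ is a non-zerodivisor modulo $J^{\sat}$ and $J$ is in generic coordinates. One can then invoke~\cite[Theorem~3.23]{CG} directly to get $\sd_{DRL}(\overline{\mathcal{F}}_f)\leq\reg(J)$, and bound $\reg(J)$ by the Macaulay-type bound from the generator degrees: one equation of degree $\deg(\overline{f})$ and $n$ equations of degree $q$ give $\reg(J)\leq\deg(\overline{f})+n(q-1)$. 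The term $(q-1)n$ thus drops out of a regularity estimate, not from tracking reductions in $V_{\overline{\mathcal{F}}_f,d}$. Your analysis of $\deg(\overline{f})$ in both the general case and the HFE case is correct and matches the paper, but it is only the easy half of the argument.
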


\begin{proof}
Let $\overline{\mathcal{F}}_f^h=\{f^h\mid f\in\overline{\mathcal{F}}_f\}$ be the system obtained from $\overline{\mathcal{F}}_f$ by homogenizing each equation with respect to $X_n$, where $X_n$ is a new variable. Let $$J=(f^h\mid f\in\overline{\mathcal{F}}_f)\subset S[X_n].$$ We claim that $J$ is in generic coordinates. According to~\cite[Theorem~2.4 and Definition~1.5]{BS}, in our situation $J$ is in generic coordinates if and only if $X_n$ is not a zero divisor on $S[X_n]/J^{\sat}$, where $J^{\sat}$ is the saturation of $J$ with respect to the irrelevant maximal ideal of $S[X_n]$. Substituting $X_n=0$ in $\mathcal{E}^h$ one obtains the equations $X_0=\ldots=X_{n-1}=0$. Therefore the projective zero locus of $J$ does not contain any point with $X_n=0$. This means that $X_n\nmid 0$ modulo $J^{\sat}$, hence proving that $J$ is in generic coordinates.

Denote by $\reg(J)$ the Castelnuovo-Mumford regularity of $J$ (see~\cite[Definition~3.17]{CG} for a definition of Castelnuovo-Mumford regularity). Since $J$ is in generic coordinates, then 
\begin{equation}\label{sd_hfe}
\sd_{DRL}(\overline{\mathcal{F}}_f)\leq\reg(J)\leq (q-1)(\lfloor\log_q(d)+1\rfloor+n),
\end{equation} 
where the first inequality follows from~\cite[Theorem~3.23]{CG}. 
By Remark~\ref{degov} we have $\deg(\overline{f})\leq (q-1)\lfloor\log_q(d)+1\rfloor$.
The second inequality in (\ref{sd_hfe}) now follows from the fact that $\overline{\mathcal{F}}_f$ consists of one equation of degree smaller than or equal to $(q-1)\lfloor\log_q(d)+1\rfloor$ and $n$ equations of degree $q$.
\end{proof}

\section{An improved bound on the last fall degree}\label{sec:betterbound}

In this section we study the last fall degree $d_{\mathcal{F}'_f}$ of the system $\mathcal{F}'_f$. Let $k$ be a finite field of cardinality $q^n$, $R=k[X]$, and $S=k[X_0,...,X_{n-1}]$. 
In~\cite[Theorem 4.5]{HKYY} it is shown that 
\begin{equation}\label{HKYYbound}
d_{\mathcal{F}'_f}\leq \max\left\{\lfloor2(q-1)(\log_q(\deg(\mathcal{F})+1)+1)\rfloor,q\right\}.\end{equation}
As proved in~\cite[Proposition~4.1]{HKYY}, it suffices to bound the last fall degree of the system
$$\overline{\mathcal{F}}_f=\{\overline{f} : f\in\mathcal{F}\}\cup
\{X^q_0-X_1,\ldots,X^q_{n-2}-X_{n-1},X^q_{n-1}-X_0\}.$$ 
In this section, we improve the bound on the last fall degree of $\overline{\mathcal{F}}_f$ proven in~\cite{HKYY} by approximately a factor two. This results in a bound that improves (\ref{HKYYbound}) by the same factor, and ultimately leads to Theorem~\ref{main}.

\begin{Notation}
Throughout the section, we write $\equiv_i$ in place of $\equiv_{\overline{\mathcal{F}}_f,i}$.
\end{Notation}

\begin{Definition}
For $e\in\mathbb{Z}_{\geq0}$, write $e=\sum_i a_i q^i$ with $a_i\in\{0,\ldots,q-1\}$ in base $q$ expansion. Then the {\bf weight} of $e$ is $w(e)=\sum_i a_i$.
\end{Definition}

\begin{Definition}
Let $f\in R$, $f=\sum_i b_i X^i$. The {\bf weight} of $f$ is $$w(f)=\max\{w(i) : b_i\neq 0\}.$$
\end{Definition}

The next lemma collects some useful facts on the weight and on the degree of the fake Weil descent. The proof is easy and left to the reader.

\begin{Lemma}\label{weights}
Let $e\in\mathbb{Z}_{\geq0}$, $f\in R$. Then:
\begin{enumerate}
\item $w(e)=w(eq)$
\item $w(e)\leq (q-1)\lfloor\log_q(e)+1\rfloor=(q-1)\lceil\log_q(e+1)\rceil$
\item $w(f)\leq (q-1)\lfloor\log_q(\deg(f))+1\rfloor$
\item $\deg(\overline{X^e})=\left\{\begin{array}{ll}     (q-1)n \leq w(e) & \mbox{ if } q^n-1\mid e,\\
                                                                                         w(e \Mod q^n-1)\leq w(e) & \mbox{ otherwise.}
\end{array}\right.$
\end{enumerate}
\end{Lemma}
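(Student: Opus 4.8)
I would prove the four statements in order, treating (1)--(3) quickly and devoting the real work to (4).

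For (1), write $e=\sum_i a_iq^i$ in base $q$; then $eq=\sum_i a_iq^{i+1}$ is again a base-$q$ expansion with the same nonzero digits, so $w(eq)=\sum_i a_i=w(e)$. For (2), assume $e\geq 1$ and recall that the base-$q$ expansion of $e$ has exactly $N:=\lfloor\log_q(e)\rfloor+1$ digits, each at most $q-1$; hence $w(e)\leq (q-1)N$. Since $\lfloor\log_q(e)\rfloor+1=\lfloor\log_q(e)+1\rfloor$, and $N$ also equals $\lceil\log_q(e+1)\rceil$ (both count the number of base-$q$ digits of $e$, as one checks from $q^{N-1}\leq e<q^{N}$), the two displayed expressions in (2) agree. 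Statement (3) then follows from (2) together with monotonicity: for every $i$ with $b_i\neq 0$ one has $i\leq\deg(f)$, so $w(i)\leq (q-1)\lfloor\log_q(i)+1\rfloor\leq(q-1)\lfloor\log_q(\deg f)+1\rfloor$, and taking the maximum over such $i$ gives the claim.

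For (4) I would first identify the reduced exponent. Assuming $e\geq 1$, the relation $X^{q^n}\equiv X\pmod{X^{q^n}-X}$ gives $X^{q^n+t}\equiv X^{1+t}$, so each reduction step lowers the exponent by $q^n-1$ whenever it is at least $q^n$. Iterating, $X^e\equiv X^{e'}$, where $e'$ is the unique representative of $e$ modulo $q^n-1$ lying in $\{1,\ldots,q^n-1\}$; thus $e'=e\bmod(q^n-1)$ when $(q^n-1)\nmid e$, while $e'=q^n-1$ when $(q^n-1)\mid e$. Writing $e'=\sum_{j=0}^{n-1}e'_jq^j$, by definition $\overline{X^e}=X_0^{e'_0}\cdots X_{n-1}^{e'_{n-1}}$, so $\deg(\overline{X^e})=\sum_{j=0}^{n-1}e'_j=w(e')$. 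In the divisible case $e'=q^n-1$ has all $n$ digits equal to $q-1$, giving $\deg(\overline{X^e})=(q-1)n$; in the non-divisible case $\deg(\overline{X^e})=w(e\bmod(q^n-1))$, matching the two branches of the statement.

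It remains to prove the inequalities $(q-1)n\leq w(e)$ and $w(e\bmod(q^n-1))\leq w(e)$, both of which reduce to showing that the reduction $e\mapsto e'$ does not increase the weight, i.e. $w(e')\leq w(e)$. This is the only nontrivial point. I would argue it via subadditivity of the weight under addition: in the base-$q$ addition algorithm, each carry removes $q$ from one digit and adds $1$ to the next, so $w(a+b)=w(a)+w(b)-(q-1)c$ with $c\geq 0$ the number of carries; in particular $w(a+b)\leq w(a)+w(b)$. Now group the base-$q$ digits of $e$ into consecutive blocks of length $n$, writing $e=\sum_{m\geq 0}c_m(q^n)^m$ with $0\leq c_m<q^n$; since the blocks occupy disjoint digit positions, $w(e)=\sum_m w(c_m)$. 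Because $q^n\equiv 1\pmod{q^n-1}$, the number $s=\sum_m c_m$ is congruent to $e$ modulo $q^n-1$, and iterated subadditivity gives $w(s)\leq\sum_m w(c_m)=w(e)$. Repeating this block-summing step (which strictly decreases the integer as long as it exceeds $q^n-1$) terminates at $e'\in\{1,\ldots,q^n-1\}$ without ever increasing the weight, so $w(e')\leq w(e)$. Substituting $e'=q^n-1$ in the divisible case yields $(q-1)n=w(e')\leq w(e)$, and in the non-divisible case yields $w(e\bmod(q^n-1))=w(e')\leq w(e)$, completing (4). The main obstacle is precisely this last weight-monotonicity claim; once the carry/subadditivity lemma and the block decomposition are in place, everything else is bookkeeping.
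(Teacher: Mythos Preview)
Your proof is correct. The paper itself does not supply an argument: it states that ``the proof is easy and left to the reader,'' so there is nothing to compare strategies against. Your handling of (1)--(3) is routine and fine; for (4) your subadditivity-plus-block-summation argument is a clean way to establish $w(e')\leq w(e)$, and the identification of $e'$ with the representative of $e$ in $\{1,\ldots,q^n-1\}$ modulo $q^n-1$ is exactly right.

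One small point worth noting: the lemma as stated in the paper allows $e=0$, but in that case $q^n-1\mid 0$ while $\deg(\overline{X^0})=\deg(1)=0\neq (q-1)n$, so the displayed formula in (4) is literally false at $e=0$. You implicitly avoid this by writing ``assuming $e\geq 1$''; that is the correct fix, and you might make it explicit that $e=0$ must be excluded (or treated trivially, since $\overline{X^0}=1$). This is a defect of the statement, not of your argument.
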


\begin{remark}\label{degov}
It follows from part 4 of Lemma~\ref{weights} that $\deg(\overline{f})\leq w(f)$. Thus, by part 3, $\deg(\overline{f})\leq (q-1)\lfloor\log_q(\deg(f))+1\rfloor$.
\end{remark}

The first three points of the next lemma are shown in \cite[Lemma 3.2]{HKYY}, 
while the fourth is stated in~\cite[pg. 586]{HKYCrypto}.

\begin{Lemma}\label{addmul}
Let $h_1,h_2,h_3\in R$, let $h\in S$. Then:
\begin{enumerate}
 \item $\overline{h_1+h_2}\equiv_{\max\{\deg(\overline{h_1}),\deg(\overline{h_2})\}}\overline{h_1}+\overline{h_2}$
 \item $\overline{h_1 h_2}\equiv_{\deg(\overline{h_1})+\deg(\overline{h_2})}\overline{h_1}\cdot\overline{h_2}$
 \item there exists $g\in R$ with $\deg(g)<q^n$ such that $h\equiv_{\deg(h)}\overline{g}$
 \item if $\overline{h_1}\equiv_{r}\overline{h_2}$ then $\overline{h_1}\cdot\overline{h_3}\equiv_{\max\{r,\deg(\overline{h_1}\cdot\overline{h_3}),\deg(\overline{h_2}\cdot\overline{h_3})\}}\overline{h_2}\cdot\overline{h_3}$ 
\end{enumerate}
\end{Lemma}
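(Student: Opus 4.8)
The plan is to isolate one rewriting mechanism and one structural fact about the spaces $V_i$, and then read off all four points from them. The mechanism is the family of field equations $X_0^q-X_1,\dots,X_{n-1}^q-X_0$ in $\overline{\mathcal{F}}_f$: multiplying $X_j^q-X_{j+1}$ (indices mod $n$) by a monomial and subtracting rewrites a factor $X_j^q$ as $X_{j+1}$, which \emph{strictly lowers the degree} by $q-1$. Hence any sequence of such rewrites terminates and, crucially, stays within $V_{\deg(P)}$ when applied to a polynomial $P\in S$. I would also set up the $k$-algebra homomorphism $\Phi\colon S\to R$ determined by $X_j\mapsto X^{q^j}$; directly from the definition of the bar map one has $\Phi(\overline{h})\equiv h\pmod{X^{q^n}-X}$ for $h\in R$, and $\overline{h}$ depends only on $h\bmod (X^{q^n}-X)$. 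The structural fact is the \emph{multiplication property}: if $g\in V_r$ and $p\in S$ then $gp\in V_{\max\{r,\deg(gp)\}}$. This is immediate once one knows $V_r\subseteq V_s$ for $r\le s$ (itself clear from the minimality in the definition of $V_i$): put $s=\max\{r,\deg(gp)\}$, so $g\in V_r\subseteq V_s$ and $\deg(gp)\le s$, whence $gp\in V_s$ by the closure property of $V_s$.

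Point 1 is immediate, since the bar map is $k$-linear by construction: $\overline{h_1+h_2}=\overline{h_1}+\overline{h_2}$ as elements of $S$, so the difference is $0$ and lies in every $V_i$. For Point 3, by linearity it suffices to treat a single monomial $m=\prod_j X_j^{c_j}$, and I would reduce $m$ to its normal form by repeatedly applying $X_j^q\rightsquigarrow X_{j+1}$ whenever some exponent is $\ge q$. Every intermediate polynomial has degree $\le\deg(m)\le\deg(h)$, so the whole reduction takes place inside $V_{\deg(h)}$; it terminates at a monomial with all exponents $<q$, which is the canonical representative $\overline{g}$ with $g=\Phi(m)\bmod(X^{q^n}-X)\in R$ of degree $<q^n$. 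Summing over the monomials of $h$ gives $h\equiv_{\deg(h)}\overline{g}$, and the degree bound is respected because each monomial $m$ of $h$ satisfies $\deg(m)\le\deg(h)$, so no cancellation can push a term above level $\deg(h)$.

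Point 2 is then Point 3 applied to the single polynomial $P=\overline{h_1}\cdot\overline{h_2}\in S$, whose degree is $\le\deg(\overline{h_1})+\deg(\overline{h_2})$. The reduction gives $P\equiv_{\deg(P)}\overline{g}$ with $g=\Phi(P)\bmod(X^{q^n}-X)$; since $\Phi$ is a ring homomorphism and $\Phi(\overline{h_i})\equiv h_i$, we get $\Phi(P)\equiv h_1h_2$, hence $\overline{g}=\overline{h_1h_2}$, which is exactly the claim (after enlarging the level from $\deg(P)$ to $\deg(\overline{h_1})+\deg(\overline{h_2})$ via monotonicity). Point 4 follows from the multiplication property: writing $g=\overline{h_1}-\overline{h_2}\in V_r$, one has $\overline{h_1}\cdot\overline{h_3}-\overline{h_2}\cdot\overline{h_3}=g\,\overline{h_3}$ with $\deg(g\,\overline{h_3})\le\max\{\deg(\overline{h_1}\overline{h_3}),\deg(\overline{h_2}\overline{h_3})\}$, so $g\,\overline{h_3}\in V_{\max\{r,\deg(g\overline{h_3})\}}\subseteq V_{r'}$ for $r'=\max\{r,\deg(\overline{h_1}\overline{h_3}),\deg(\overline{h_2}\overline{h_3})\}$.

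The main obstacle is the combinatorial bookkeeping hidden in Points 2 and 3: verifying that the greedy rewriting $X_j^q\rightsquigarrow X_{j+1}$ is confluent, so that its normal form is genuinely the canonical representative $\overline{\Phi(m)}$, and that this matches the definition of the bar map through exponent arithmetic. The reduction corresponds to computing $\sum_j c_jq^j$ modulo $q^n-1$ (because the wraparound $X_{n-1}^q\rightsquigarrow X_0$ encodes $q^n\equiv 1$) and re-expanding in base $q$, and the delicate point is the case $q^n-1\mid e$ flagged in Lemma~\ref{weights}(4), where the correct normal form is $\prod_j X_j^{q-1}$ rather than $1$; this is exactly where the reduction modulo $q^n-1$ and the reduction modulo $X^{q^n}-X$ implicit in the bar map must be reconciled.
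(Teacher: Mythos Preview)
The paper does not actually prove this lemma: it simply records that Points~1--3 are \cite[Lemma~3.2]{HKYY} and Point~4 is stated in \cite[pg.~586]{HKYCrypto}. So there is no in-paper argument to compare against, and your proposal supplies a self-contained proof where the paper offers none.

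Your argument is correct. The key ingredients---monotonicity $V_r\subseteq V_s$ for $r\le s$, the multiplication property $g\in V_r,\ p\in S\Rightarrow gp\in V_{\max\{r,\deg(gp)\}}$, and the degree-lowering rewrite $X_j^q\rightsquigarrow X_{j+1}$ coming from the field equations in $\overline{\mathcal{F}}_f$---are exactly what is needed, and you deploy them cleanly. Point~1 is indeed an equality by $k$-linearity of the bar map; Point~3 follows from the rewriting argument (termination is guaranteed since each step drops the total degree by $q-1$, and confluence is immediate because rewrites on distinct variables commute); Point~2 is Point~3 applied to $\overline{h_1}\cdot\overline{h_2}$ together with the ring homomorphism $\Phi$; and Point~4 is the multiplication property applied to $g=\overline{h_1}-\overline{h_2}\in V_r$.

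The ``main obstacle'' you flag at the end is not a gap but a correctly handled subtlety: the non-constant monomial $m$ never rewrites to $1$, so when $q^n-1\mid\sum_j c_jq^j$ the normal form is $\prod_j X_j^{q-1}$, matching the definition of $\overline{X^e}$ via $X^e\bmod(X^{q^n}-X)$ rather than the naive residue modulo $q^n-1$. Your treatment of this is sound. In short, your proof is correct and, since the paper only cites this result, there is nothing further to compare.
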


The next lemma is similar \cite[Lemma 4.2]{HKYY}, but we improve the bound by approximately a factor two. 

\begin{Lemma}\label{remainders}
Let $h_1,h_2\in R$ and assume that $\deg(h_2)=d>0$. Let $h_3=h_1 \Mod h_2$. Let $u=(q-1)\lceil\log_q(d)+1\rceil+1$. Assume that $\overline{h_2}\equiv_u 0$. Then $$\overline{h_3}\equiv_{\max\{w(h_1),u\}}\overline{h_1}.$$
\end{Lemma}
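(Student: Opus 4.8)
The plan is to collapse the statement to a single divisibility assertion and then isolate the one genuinely delicate estimate.

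First I would use that $\overline{\cdot}\colon R\to S$ is $k$-linear. Writing the Euclidean division $h_1=Qh_2+h_3$ with $\deg h_3<d$ (and discarding the trivial case $Q=0$, where $h_1=h_3$), linearity gives the \emph{exact} identity $\overline{h_1}-\overline{h_3}=\overline{Qh_2}$. Hence, setting $w':=\max\{w(h_1),u\}$, the conclusion $\overline{h_3}\equiv_{w'}\overline{h_1}$ is equivalent to the single assertion $\overline{Qh_2}\equiv_{w'}0$, i.e.\ $\overline{Qh_2}\in V_{w'}$. This is the only thing to prove. Before attacking it I would record the degree bounds showing $w'$ is the correct budget. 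By Remark~\ref{degov}, $\deg\overline{h_3}\le w(h_3)$; since $\deg h_3<d$, Lemma~\ref{weights}(3) gives $w(h_3)\le(q-1)\lfloor\log_q(\deg h_3)+1\rfloor\le(q-1)\lceil\log_q d+1\rceil=u-1<u\le w'$ (the case $h_3$ constant being immediate). Likewise, because $Qh_2=h_1-h_3$ and the monomials of a difference lie among those of the summands, $w(Qh_2)\le\max\{w(h_1),w(h_3)\}\le w'$, so $\deg\overline{Qh_2}\le w'$ too by Remark~\ref{degov}. Thus the element I wish to place in $V_{w'}$ genuinely has degree at most $w'$: the difficulty is membership, not the degree of the target.

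To prove $\overline{Qh_2}\equiv_{w'}0$ I would start from $\overline{h_2}\equiv_u0$. The naive route multiplies this congruence by $\overline{Q}$: Lemma~\ref{addmul}(4) gives $\overline{Q}\cdot\overline{h_2}\equiv_{\max\{u,\deg(\overline{Q}\cdot\overline{h_2})\}}0$, and Lemma~\ref{addmul}(2) gives $\overline{Qh_2}\equiv_{\deg\overline{Q}+\deg\overline{h_2}}\overline{Q}\cdot\overline{h_2}$, yielding $\overline{Qh_2}\equiv_{\max\{u,\deg\overline{Q}+\deg\overline{h_2}\}}0$. The defect is that $\deg\overline{Q}+\deg\overline{h_2}$ is the degree of the \emph{unreduced} product, which can be as large as $\deg\overline{Q}+(u-1)$ and far exceed $w'$; bounding it by $w(Q)+w(h_2)$ is exactly what forces the extra factor $\approx2$ in the bound of~\cite{HKYY}. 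The improvement is to never form the unreduced product. Instead I would induct on $\deg h_1$: if $cX^D$ and $bX^d$ are the leading terms of $h_1$ and $h_2$, replace $h_1$ by $h_1^{\ast}=h_1-(c/b)X^{D-d}h_2$, which has smaller degree and the same remainder $h_3$. The inductive hypothesis handles $h_1^{\ast}$, and by linearity $\overline{h_1}-\overline{h_1^{\ast}}=(c/b)\,\overline{X^{D-d}h_2}$, so it remains to put $\overline{X^{D-d}h_2}$ into $V_{w'}$. The point is to compute $\overline{X^{D-d}h_2}$ \emph{directly in reduced form}, multiplying $\overline{h_2}$ by the variables constituting $\overline{X^{D-d}}$ and applying the field equations $X_j^q-X_{j+1}$ to collapse any exponent reaching $q$ back to degree $1$ as soon as it appears. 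Since multiplying an exponent by $q$ leaves its weight unchanged (Lemma~\ref{weights}(1)), this interleaving of multiplications and field-equation reductions keeps every intermediate polynomial reduced and of degree controlled by the weights actually occurring, rather than by the additive degree of a product.

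The main obstacle is precisely this last step: certifying that the interleaved multiply-and-reduce computation of $\overline{X^{D-d}h_2}$ from $\overline{h_2}\equiv_u0$ keeps every intermediate degree at most $w'$, and that the reduction does not inflate the weight, i.e.\ that $w(h_1^{\ast})\le w'$, so the induction telescopes. Both hinge on combining the hypothesis $\overline{h_2}\equiv_u0$ (which, via the closure property of $V_u$, already yields $X_i\overline{h_2}\equiv_u0$, and more generally $m\,\overline{h_2}\equiv_u0$ whenever $\deg(m\,\overline{h_2})\le u$) with the weight arithmetic of Lemma~\ref{weights}, so that the carries produced by the field equations are paid for by the slack between $\deg\overline{h_2}\le u-1$ and the true budget $w'$. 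Getting this bookkeeping exactly right — showing that no intermediate term of the reduction sequence pokes above $\max\{w(h_1),u\}$ — is the heart of the argument, and the only place where the sharpened constant $u=(q-1)\lceil\log_q d+1\rceil+1$, rather than roughly twice that, is essential.
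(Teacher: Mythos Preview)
Your reduction to the single assertion $\overline{Qh_2}\in V_{w'}$ is clean, and your diagnosis of why the naive product bound $\deg\overline{Q}+\deg\overline{h_2}$ loses a factor of two is exactly right. The gap is in the proposed induction on $\deg h_1$ via a single Euclidean step $h_1\mapsto h_1^{\ast}=h_1-(c/b)X^{D-d}h_2$. You correctly identify the two obligations this creates --- that $w(h_1^{\ast})\le w'$ and that $\overline{X^{D-d}h_2}\in V_{w'}$ --- but both can fail outright, not merely be delicate to verify. Take $q=2$, $h_2=X^2+X+1$ (so $d=2$, $u=3$) and $h_1=X^{16}$ (so $w(h_1)=1$, $w'=3$). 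One Euclidean step gives $h_1^{\ast}=-X^{15}-X^{14}$ with $w(h_1^{\ast})=w(15)=4>w'$; worse, $\overline{X^{14}h_2}=\overline{X^{16}}+\overline{X^{15}}+\overline{X^{14}}=X_4+X_0X_1X_2X_3+X_1X_2X_3$ has degree $4>w'$, so it cannot lie in $V_{w'}\subseteq S_{\le w'}$ at all. The difficulty is structural: subtracting a shift $X^{D-d}h_2$ produces monomials $X^{D-d+i}$ with $0\le i<d$, whose weights are governed by the base-$q$ digits of arbitrary integers just below $D$, and $w(h_1)$ gives no control over those.

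The paper's proof sidesteps this by never running Euclidean division on $h_1$. It works monomial by monomial, proving $\overline{X^e}\equiv_{\max\{w(e),u\}}\overline{r_e}$ for $r_e=X^e\bmod h_2$, and then sums over the support of $h_1$. The crucial manoeuvre (Claims~3--5) builds $e$ additively: write $e=e_1+e_2$ with $w(e_2)=1$, first reduce $X^{e_1}$ to $r_{e_1}=\sum_{i<d}c_iX^i$, then multiply by $X^{e_2}$ and reduce each $X^{i+e_2}$ separately. Since $i<d$, one has $w(i+e_2)\le(q-1)\lceil\log_q d\rceil+1\le u$, so every intermediate stays inside the budget. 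This is the same ``reduce, then multiply by a weight-one factor'' interleaving you sketch, but applied to the remainders $r_e$ rather than to the Euclidean iterates $h_1^{\ast}$; that choice of what to keep small is precisely what makes the bound close.
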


\begin{proof} %(Long Proof) 
Write $h_2=\sum_{i=0}^d b_iX^i$ with $b_d\neq0$. Let $r_e=X^e \Mod h_2$.
Notice that if $h_1=\sum_{i=0}^\delta a_iX^i$, then $h_3=\sum_{i=0}^\delta a_ir_i$. Since $\deg(r_e)<d$, then by Lemma~\ref{weights}.3 and Remark~\ref{degov}, $\deg(\overline{r_e})\leq (q-1)\lfloor\log_q(\deg(r_e))+1\rfloor=(q-1)\lceil\log_q(\deg(r_e)+1)\rceil\leq
(q-1)\lceil\log_q(d)\rceil$. Hence $\overline{h_3}\equiv_{u}\sum_{i=0}^\delta a_i\overline{r_i}$ and $\overline{h_1}\equiv_{w(h_1)}\sum_{i=0}^\delta a_i\overline{X^i}$, by Lemma~\ref{addmul}.1. If $\overline{X^e}\equiv_{\max\{w(e),u\}} \overline{r_e}$, then $\overline{h_3}\equiv_{\max\{w(h_1),u\}}\overline{h_1}$, since by definition $w(h_1)=\max\{w(e) : a_e\neq 0\}$. Thus, we will now show in several steps that
$\overline{X^e}\equiv_{\max\{w(e),u\}} \overline{r_e}$.\\
\underline{Claim 0:} Write $r_{e}=\sum_{i=0}^{d-1}c_i X^i$. 
Then $r_{e+j}=\sum_{i=0}^{d-1} c_i r_{i+j}$ for all $j\geq 0$.\\
\underline{Proof of Claim 0:} By definition $r_{e+j}=X^{e+j}\Mod h_2=X^j r_{e}\Mod h_2$, hence $r_{e+j}=\sum_{i=0}^{d-1} c_iX^{i+j} \Mod h_2=\sum_{i=0}^{d-1} c_i r_{i+j}.$ 
Notice that the last polynomial has degree smaller than $d$, since $\deg(r_i)<d$ for all $i$.\\
\underline{Claim 1:} If $e\in\{0,1,\ldots,qd\}$, then $\overline{X^e}\equiv_u\overline{r_e}$.\\
\underline{Proof of Claim 1:} If $e\leq d-1$, then $r_e=X^e$, and hence $\overline{X^e}\equiv_u\overline{r_e}$. For $e=d$, we have $r_d=\frac{-1}{b_d}\sum_{i=0}^{d-1}b_i X^i$, i.e. $b_d(X^d-r_d)=h_2$ and hence $\overline{X^d}\equiv_u\overline{r_d}$. Now we prove the claim by induction. Assume we have $\overline{X^{e'}}\equiv_u\overline{r_{e'}}$ for all $e'<e$ and $e\leq qd$. Write $r_{e-1}=\sum_{i=0}^{d-1}c_i X^i$. Then $r_e=\sum_{i=0}^{d-1}c_i r_{i+1}$ by Claim 0.
Now $e-1\leq qd-1<q^{\lceil\log_q(d)+1\rceil}$, thus $w(e-1)\leq(q-1)\lceil\log_q(d)+1\rceil$ by Lemma~\ref{weights}.2. Hence, $\deg(\overline{X})+\deg(\overline{X^{e-1}})\leq 1+(q-1)\lceil\log_q(d)+1\rceil=u$, where the inequality follows from Lemma~\ref{addmul}.4. Therefore, 
\begin{align*}
\overline{X^e}\equiv_u&\overline{X}\cdot\overline{X^{e-1}} &\text{ (by Lemma \ref{addmul}.2)}\\
\equiv_u&\overline{X}\cdot\overline{r_{e-1}} &\text{ (by induction and Lemma \ref{addmul}.4, since $w(e)\leq u$)}\\
\equiv_u&\overline{\sum_{i=0}^{d-1}c_i X^{i+1}} &\text{ (by Lemma \ref{addmul}.2)}\\
\equiv_u&\overline{\sum_{i=0}^{d-1}c_i r_{i+1}} &\text{ (by Lemma \ref{addmul}.1 and since $\overline{X^i}\equiv_u\overline{r_i}$ for $i\leq d$)}\\
\equiv_u&\overline{r_{e}}&\text{ (by Claim 0).}\\
\end{align*}
\underline{Claim 2:} If $e$ satisfies $w(e)<u$ and $\overline{X^e}\equiv_{u}\overline{r_e}$, then $\overline{X^{e+1}}\equiv_u\overline{r_{e+1}}$.\\
\underline{Proof of Claim 2:} We have 
\begin{align*}
\overline{X^{e+1}}\equiv_u&\overline{X}\cdot\overline{X^e} &\text{ (by Lemma \ref{addmul}.2)}\\
\equiv_u&\overline{X}\cdot\overline{r_e} &\text{ (by Lemma \ref{addmul}.4)}\\
\equiv_u&\overline{r_{e+1}}.
\end{align*}
\underline{Claim 3:} Assume that $w(e)<u$, $w(e')=1$, $\overline{X^e}\equiv_{u}\overline{r_e}$, and $\overline{X^{e'}}\equiv_{u}\overline{r_{e'}}$ for some $e,e'$. Then $\overline{X^{e+e'}}\equiv_u\overline{r_{e+e'}}$.\\
\underline{Proof of Claim 3:} Write $r_e=\sum_{i=0}^{d-1}c_i X^i$. We have 
\begin{align*}
\overline{X^{e+e'}}\equiv_u&\overline{X^e}\cdot\overline{X^{e'}} &\text{ (by Lemma \ref{addmul}.2)}\\
\equiv_u&\overline{r_e}\cdot\overline{X^{e'}} &\text{ (by Lemma \ref{addmul}.4)}\\
\equiv_u&\overline{\sum_{i=0}^{d-1}c_i X^{i+e'}} &\text{ (by Lemma \ref{addmul}.2)}.
\end{align*}
But $\overline{X^{e'}}\equiv_{u}\overline{r_{e'}}$. If $d=1$, we are done, so assume $d>1$. By Claim 2, $\overline{X^{e'+1}}\equiv_u\overline{r_{e'+1}}$. Since $w(e'+1)<u$, we can apply Claim 2 again. By repeated application of Claim 2, we get $\overline{X^{e'+i}}\equiv_u\overline{r_{e'+i}}$ for $i\leq d-1$ since $w(i+e')\leq w(i)+w(e')\leq (q-1)\lceil\log_q(d)\rceil+1\leq u$. Thus $$\overline{\sum_{i=0}^{d-1}c_i X^{i+e'}}\equiv_u\overline{\sum_{i=0}^{d-1}c_i r_{i+e'}}\equiv_u\overline{r_{e+e'}}$$ by Claim 0.\\
\underline{Claim 4:} If $e=mq^k$ for some $k\geq0$, $1\leq m<q$, then $\overline{X^e}\equiv_u\overline{r_e}$.\\
\underline{Proof of Claim 4:} We will prove the claim by induction on $k$ and $m$. If $k=0$ then the statement is true for all $m$ by Claim 1. We now assume that the statement holds for $e=mq^{k-1}$ for all $m$ and we show it for $e=q^k$. Letting $e=(q-1)q^{k-1}$ and $e'=q^{k-1}$ in Claim 3, we get $\overline{X^{q^k}}\equiv_u\overline{r_{q^k}}$. To complete the proof, assume that the statement holds for $e=\ell q^k$ for $1\leq \ell\leq m-1$ and show it for $e=mq^k$.
By Claim 3 with $e=(m-1)q^k$ and $e'=q^k$, we get $\overline{X^{mq^k}}\equiv_u\overline{r_{mq^k}}$.\\
\underline{Claim 5:} If $e$ satisfies $w(e)\leq u$, then $\overline{X^e}\equiv_u\overline{r_e}$.\\
\underline{Proof of Claim 5:} We will prove the claim by induction on $w(e)$. Let $w(e)=1$. Then $e=q^k$ for some $k\geq0$ and by Claim 4, $\overline{X^e}\equiv_u\overline{r_e}$. So assume $\overline{X^{e'}}\equiv_u\overline{r_{e'}}$ for $w(e')<w(e)$. We can write $e=e_1+e_2$ such that $w(e)=w(e_1)+w(e_2)$ and $w(e_2)=1$ (e.g. let $e_2=q^{\lfloor\log_q(e)\rfloor}$). Then by Claims 3 and 4 and by induction, we have $\overline{X^e}\equiv_u\overline{r_e}$.\\
\underline{Claim 6:} $\overline{X^e}\equiv_{\max\{w(e),u\}}\overline{r_e}$.\\
\underline{Proof of Claim 6:} As before, write $e=e_1+e_2$ such that $w(e)=w(e_1)+w(e_2)$ and $w(e_2)=1$. If $w(e_1)<u$, the thesis follows by Claim 3 and Claim 5. If $w(e_1)=u$, then by Claim 5, $\overline{X^{e_1}}\equiv_u\overline{r_{e_1}}$. Thus we have 
\begin{align*}
\overline{X^e}=\overline{X^{e_1+e_2}}&\equiv_{\max\{w(e),u\}}\overline{X^{e_1}}\cdot\overline{X^{e_2}} &\text{ (by Lemma \ref{addmul}.2)}\\
&\equiv_{\max\{w(e),u\}}\overline{r_{e_1}}\cdot\overline{X^{e_2}} &\text{ (by Lemma \ref{addmul}.4)}\\
&\equiv_{\max\{w(e),u\}}\overline{r_{e_1+e_2}}=\overline{r_e}.
\end{align*}
Here the last equality can be proved using the same argument as in Claim 3, noticing that by Claim 4, $\overline{X^{e_2}}\equiv_u\overline{r_{e_2}}$. This proves Claim 6 if $w(e)\leq u+1$. Proceed by induction on $w(e)$.
Letting $e=e_1+e_2$ with $w(e)=w(e_1)+w(e_2)$ and $w(e_2)=1$ and assuming by induction that $\overline{X^{e_1}}\equiv_{\max\{w(e),u\}}\overline{r_{e_1}}$, the same argument as above shows that $\overline{X^e}\equiv_{\max\{w(e),u\}}\overline{r_{e}}$.
\end{proof}

\begin{remark}
The factor 2 improvement in the previous lemma is achieved mainly in Claims 2 and 3. It follows from the idea that instead of multiplying a polynomial by $X^{e'}$ and then reducing $\Mod{h_2}$, we can repeatedly ($e'$ times) multiply by $X$ and reduce $\Mod{h_2}$ at every step, and thereby the intermediate polynomials have lower degrees.
\end{remark}

The next example shows that the bound of Lemma~\ref{remainders} is sharp.

\begin{example}
Let $k=\mathbb{F}_{2^2}=\mathbb{F}_{2}[t]/(t^2+t+1)$. Let $h_1:=X^3+tX^2+X+t^2\in R=k[X]$, and let $h_2:=X+1\in R$. Then $h_3=1$ and $\overline{h_1}-\overline{h_3}=X_0X_1+tX_1+X_0+t\in S=k[X_0,X_1]$ and $u=2$.
\end{example}

The following proposition is similar to \cite[Proposition 4.3]{HKYY}, but yields a tighter bound, due to the improvement in Lemma~\ref{remainders}.

\begin{Proposition}\label{GCDProp}
Let $\mathcal{F}=\{f\}$ with $\deg(f)>0$. Let $u=(q-1)\lceil\log_q(\deg(f))+1\rceil+1$ and let 
$g=\gcd(f,X^{q^n}-X)$. Then $\overline{g}\in V_{\overline{\mathcal{F}}_f,u}$.
\end{Proposition}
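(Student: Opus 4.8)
The plan is to compute $\overline{g}$ by running the Euclidean-type argument of Lemma~\ref{remainders} through the $\gcd$ chain, working entirely modulo $V_{\overline{\mathcal{F}}_f,u}$. Since $g=\gcd(f,X^{q^n}-X)$ can be obtained as a $\ZZ$-linear-combination-of-multiples $g = af + b(X^{q^n}-X)$, but more usefully as the last nonzero remainder in the Euclidean algorithm applied to $f$ and $X^{q^n}-X$, I would set $h_2^{(0)}=X^{q^n}-X$ and $h_1^{(0)}=f$ and track remainders. The crucial first observation is that $\overline{X^{q^n}-X}\equiv_u 0$: indeed $\overline{X^{q^n}}=\overline{X}$ by definition of the fake Weil descent map (the exponent $q^n$ reduces to $1$ modulo $X^{q^n}-X$), so $\overline{X^{q^n}-X}=0$ literally, and in particular it lies in $V_{\overline{\mathcal{F}}_f,u}$. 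This gives the base hypothesis ``$\overline{h_2}\equiv_u 0$'' needed to invoke Lemma~\ref{remainders}.

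The key steps, in order, are as follows. First I would run one Euclidean step: let $r = f \Mod (X^{q^n}-X)$. Since $\deg(X^{q^n}-X)=q^n$ and $\overline{X^{q^n}-X}\equiv_u 0$, Lemma~\ref{remainders} (with $h_2 = X^{q^n}-X$, $h_1=f$, $h_3=r$) yields $\overline{r}\equiv_{\max\{w(f),u\}}\overline{f}$. But I must control $w(f)$: by Lemma~\ref{weights}.3, $w(f)\leq (q-1)\lfloor\log_q(\deg f)+1\rfloor = (q-1)\lceil\log_q(\deg f +1)\rceil < u$, so in fact $\max\{w(f),u\}=u$ and $\overline{r}\equiv_u\overline{f}$. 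Second, I iterate: at each subsequent step the divisor $h_2$ is a remainder of degree strictly less than $\deg(f)\le q^n$, so the relevant threshold is $(q-1)\lceil\log_q(\deg h_2)+1\rceil+1\le u$, and provided the current divisor's fake descent is $\equiv_u 0$, Lemma~\ref{remainders} propagates the congruence $\overline{r_{\mathrm{new}}}\equiv_u\overline{r_{\mathrm{old}}}$. Chaining these congruences down the Euclidean algorithm until the last nonzero remainder, which is $g$ (up to a unit scalar, absorbed by $k$-linearity of the bar map and of $V_{\overline{\mathcal{F}}_f,u}$), gives $\overline{g}\equiv_u \overline{f}\equiv_u\cdots$. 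Finally, since $\overline{f}\in\overline{\mathcal{F}}_f\subseteq V_{\overline{\mathcal{F}}_f,u}$ (as $\deg(\overline f)\le w(f)\le u$ by Remark~\ref{degov}, so $\overline{f}=\overline{f}\cap S_{\le u}$ lies in $V_{\overline{\mathcal{F}}_f,u}$), the chain of $\equiv_u$ relations collapses to $\overline{g}\in V_{\overline{\mathcal{F}}_f,u}$, as desired.

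The main obstacle I anticipate is verifying the hypothesis ``$\overline{h_2}\equiv_u 0$'' at \emph{every} stage of the Euclidean algorithm, not merely the first. After the initial step the divisor is no longer $X^{q^n}-X$ but some genuine remainder $h_2'$ of lower degree, whose fake descent $\overline{h_2'}$ is not obviously $\equiv_u 0$; rather, I expect to prove it \emph{is} $\equiv_u 0$ precisely because $h_2'$ is itself a remainder obtained from $f$ and $X^{q^n}-X$, so that by the previous application of Lemma~\ref{remainders} one has $\overline{h_2'}\equiv_u \overline{(\text{earlier divisor})}\equiv_u 0$. Thus the argument is genuinely inductive: the conclusion $\overline{r}\equiv_u 0$ of each step feeds the hypothesis of the next. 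Making this induction precise — specifying that each successive divisor's fake descent is congruent to $0$ mod $V_{\overline{\mathcal{F}}_f,u}$ and that its degree stays below the regime where the threshold exceeds $u$ — is the heart of the proof. A cleaner alternative, which I would adopt if the inductive bookkeeping becomes delicate, is to express $g$ directly via a single application: since $g \mid f$, write $f = g \cdot s$ and argue about $\overline{f}$, $\overline{g}$, $\overline{s}$ using Lemma~\ref{addmul}.2 together with the fact that $\overline{X^{q^n}-X}=0$; but the Euclidean chain is the more robust route and I would present that.
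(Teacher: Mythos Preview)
Your overall strategy---run the Euclidean algorithm on $f$ and $X^{q^n}-X$ and propagate congruences via Lemma~\ref{remainders}---is exactly the paper's. But your first step is set up in the wrong direction, and this is not innocuous. You take $h_2=X^{q^n}-X$ and $h_1=f$; in Lemma~\ref{remainders} the parameter $u$ is \emph{defined} as $(q-1)\lceil\log_q(\deg h_2)+1\rceil+1$, so with $\deg h_2=q^n$ the lemma's threshold is $(q-1)(n+1)+1$, not the Proposition's $u$. The conclusion you write, $\overline{r}\equiv_{\max\{w(f),u\}}\overline{f}$ with the Proposition's $u$, is therefore unjustified. (In the typical case $\deg f<q^n$ this step is vacuous since $r=f$, but you do not observe this, and your statement ``at each subsequent step the divisor $h_2$ is a remainder of degree strictly less than $\deg(f)$'' then fails at the very next step, where the divisor is $f$ itself.)

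The fix is to swap the roles, exactly as the paper does: set $g_{-1}=X^{q^n}-X$, $g_0=f$, and compute $g_1=(X^{q^n}-X)\Mod f$. Now $h_2=f$, so the lemma's $u$ coincides with the Proposition's $u$; the hypothesis $\overline{f}\equiv_u 0$ holds because $\overline{f}\in\overline{\mathcal{F}}_f$ and $\deg(\overline{f})\le w(f)<u$; and $h_1=X^{q^n}-X$ has $w(h_1)=1\le u$, so the lemma gives $\overline{g_1}\equiv_u\overline{X^{q^n}-X}=0$. From here your inductive scheme is correct, but you should say explicitly that at step $j$ one has $\deg(g_{j-1})\le\deg f$ (so the lemma's threshold is $\le u$) and $w(g_{j-2})\le u$ by Lemma~\ref{weights}.3 (so $\max\{w(g_{j-2}),u\}=u$); this is the control on $w(h_1)$ that you only mention for the first step. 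Finally, note the small edge case the paper handles and you omit: if $\deg(g_{k-1})=0$ then Lemma~\ref{remainders} does not apply, but $\overline{g_{k-1}}$ is a nonzero constant in $V_u$, so $\overline{g}\in V_u$ trivially. Your proposed ``cleaner alternative'' via $f=gs$ and Lemma~\ref{addmul}.2 does not work: knowing $\overline{f}\equiv_u 0$ and $\overline{f}\equiv\overline{g}\cdot\overline{s}$ gives no handle on $\overline{g}$ alone.
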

\begin{proof}
Write $V_u$ for $V_{\overline{\mathcal{F}}_f,u}$.
We use the Euclidean algorithm to compute the GCD of $f$ and $X^{q^n}-X$. It works as follows: At every step $k$, the Euclidean algorithm computes the remainder $g_k$ as $g_k:=g_{k-2} \Mod g_{k-1}$, where $g_0=f$ and $g_{-1}=X^{q^n}-X$. The algorithm terminates when $g_k=0$ for some $k$. Then $g_{k-1}=g=\gcd(f,X^{q^n}-X)$.

We claim that for every polynomial $g_j$ with $j\geq 1$, one has $\overline{g_j}\in V_u$, that is $\overline{g_j}\equiv_u 0$. We proceed by induction on $j\geq 1$. For $j=1$, the algorithm computes $g_1:=X^{q^n}-X \Mod f$. By Lemma \ref{remainders}, letting $h_1=X^{q^n}-X$, $h_2=f$, and $h_3=g_1$, we obtain $\overline{g_1}\equiv_u\overline{X^{q^n}-X}=0$, since $w(h_1)=1\leq u$.

Assume now that $\overline{g_i}\equiv_u 0$ for $1\leq i\leq j-1$. By Lemma~\ref{remainders}, letting $h_1=g_{j-2}$, $h_2=g_{j-1}$, and $h_3=g_j$, we get $\overline{g_j}\equiv_u\overline{g_{j-2}}\equiv_u 0$, since $w(g_{j-2})\leq u$. 
Notice that $\deg(g_{j-1})>0$, except possibly for $j=k$. If $\deg(g_{k-1})=0$, then $\overline{g_{k-1}}\in V_u$ is invertible, hence $\overline{g_k}\equiv_u 0$.
\end{proof}

The following theorem corresponds to~\cite[Theorem 4.5]{HKYY} and~\cite[Theorem 1]{HKYCrypto}. The proof is very similar, but we use our improved bound. We will also use the following lemma.

\begin{Lemma}[\cite{HKYY}, Lemma 3.3]\label{ideals}
Let $h\in R$. Then $h\in I$ if and only if $\overline{h}\in\overline{I}$, where $I\subseteq R$ is the ideal generated by $\mathcal{F}_f$ and $\overline{I}\subseteq S$ is the ideal generated by $\overline{\mathcal{F}}_f$.
\end{Lemma}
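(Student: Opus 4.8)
The plan is to translate membership in $\overline{I}$ back and forth into membership in $I$ by means of the $k$-algebra homomorphism
$$\psi\colon S\longrightarrow R/(X^{q^n}-X),\qquad X_j\longmapsto X^{q^j}\bmod(X^{q^n}-X),$$
and to handle the two implications separately. Writing $\pi\colon R\to R/(X^{q^n}-X)$ for the quotient map, the key bookkeeping fact is that $\psi(\overline{h})=\pi(h)$ for every $h\in R$. Since the bar map is $k$-linear, it is enough to check this on a monomial $X^e$: if $X^{e'}=X^e\Mod{X^{q^n}-X}$ has base-$q$ expansion $e'=\sum_j e'_jq^j$ with $0\le e'_j\le q-1$, then $\psi(\overline{X^e})=\psi(X_0^{e'_0}\cdots X_{n-1}^{e'_{n-1}})=X^{\sum_j e'_jq^j}=X^{e'}=\pi(X^e)$.

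For the implication $\overline{h}\in\overline{I}\Rightarrow h\in I$ I would simply apply $\psi$. On the generators of $\overline{I}$ one has $\psi(\overline{f})=\pi(f)\in\pi(I)$ for $f\in\mathcal{F}$, while $\psi(X_j^q-X_{j+1})=0$ for $0\le j\le n-2$ and $\psi(X_{n-1}^q-X_0)=X^{q^n}-X\equiv 0$. Since $\psi$ is a ring homomorphism and $\pi(I)$ is an ideal containing the images of all generators of $\overline{I}$, we get $\psi(\overline{I})\subseteq\pi(I)$. Applying $\psi$ to the hypothesis yields $\pi(h)=\psi(\overline{h})\in\pi(I)$. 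Because $X^{q^n}-X\in\mathcal{F}_f$, we have $\ker\pi=(X^{q^n}-X)\subseteq I$, so $\pi^{-1}(\pi(I))=I$ and therefore $h\in I$.

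For the converse $h\in I\Rightarrow\overline{h}\in\overline{I}$ I would expand $h=\sum_{f\in\mathcal{F}}a_f f+b\,(X^{q^n}-X)$ with $a_f,b\in R$ and apply the bar map. By $k$-linearity $\overline{h}=\sum_f\overline{a_f f}+\overline{b\,(X^{q^n}-X)}$. For each $f$, Lemma~\ref{addmul}.2 gives $\overline{a_f f}-\overline{a_f}\cdot\overline{f}\in V_{\overline{\mathcal{F}}_f,\deg(\overline{a_f})+\deg(\overline{f})}\subseteq V_{\overline{\mathcal{F}}_f,\infty}=\overline{I}$, while $\overline{a_f}\cdot\overline{f}\in\overline{I}$ since $\overline{f}\in\overline{\mathcal{F}}$; hence $\overline{a_f f}\in\overline{I}$. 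Finally $\overline{b\,(X^{q^n}-X)}=0$: since $\overline{X^e}$ depends only on $X^e\Mod{X^{q^n}-X}$ and $X^{m+q^n}\equiv X^{m+1}$ modulo $X^{q^n}-X$ for every exponent $m$, the monomials of $\overline{b\,X^{q^n}}$ and $\overline{b\,X}$ coincide termwise and cancel. Combining the three contributions gives $\overline{h}\in\overline{I}$.

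The step I expect to require the most care is the dictionary $\psi(\overline{h})=\pi(h)$, together with the (automatic, since $S$ is a free $k$-algebra) well-definedness of $\psi$; once this is in place both implications are short. The only genuinely ideal-theoretic input is that every auxiliary space $V_{\overline{\mathcal{F}}_f,i}$ lies inside $\overline{I}$, which is immediate from $V_{\overline{\mathcal{F}}_f,\infty}=\overline{I}$, together with the observation that the bar map factors through reduction modulo $X^{q^n}-X$ — precisely what makes the field-equation term vanish in the forward direction.
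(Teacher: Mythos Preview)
The paper does not give its own proof of this lemma: it is simply quoted from \cite[Lemma~3.3]{HKYY} and used as a black box. So there is nothing in the paper to compare your argument against.

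That said, your proof is correct and self-contained. The homomorphism $\psi\colon S\to R/(X^{q^n}-X)$, $X_j\mapsto X^{q^j}$, together with the identity $\psi(\overline{h})=\pi(h)$, is exactly the right dictionary, and your verification of it on monomials is clean. Both implications go through as you wrote them: for $\overline{h}\in\overline{I}\Rightarrow h\in I$ you correctly use that $\ker\pi\subseteq I$; for $h\in I\Rightarrow\overline{h}\in\overline{I}$ your use of Lemma~\ref{addmul}.2 to show $\overline{a_f f}\in\overline{I}$ is legitimate (the congruence there is with respect to $\overline{\mathcal{F}}_f$, so the difference lands in $V_{\overline{\mathcal{F}}_f,\infty}=\overline{I}$), and the vanishing of $\overline{b(X^{q^n}-X)}$ follows immediately from the fact that the bar map factors through reduction modulo $X^{q^n}-X$. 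One small stylistic remark: in the forward direction you could bypass Lemma~\ref{addmul}.2 entirely by observing directly that $\overline{a_f f}-\overline{a_f}\cdot\overline{f}$ lies in the ideal generated by the field-type equations $X_j^q-X_{j+1}$ (this is really what the proof of Lemma~\ref{addmul}.2 shows), but invoking the lemma as stated is perfectly fine.
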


In the next theorem we assume that $\mathcal{F}$ does not contain any constants. In fact, if $\mathcal{F}$ contains the zero polynomial, then it can be removed without affecting the last fall degree of $\mathcal{F}$. If $\mathcal{F}$ contains a nonzero constant, then its last fall degree is zero.

\begin{Theorem}\label{main}
Let $k$ be a finite field of cardinality $q^n$, and let $\mathcal{F}\subset R\setminus k$ be finite. Let $d\in\mathbb{Z}_{>0}$ be the smallest integer such that $\deg(f)\leq d$ for some $f\in\mathcal{F}$ and such that for all $f\in\mathcal{F}$, we have $\deg(\overline{f})\leq (q-1)\lceil\log_q(d)+1\rceil+1$. Then $$d_{\mathcal{F}'_f}\leq (q-1)\lceil\log_q(d)+1\rceil+1.$$
\end{Theorem}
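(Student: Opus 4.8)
The plan is to adapt the proof of \cite[Theorem~4.5]{HKYY}, feeding in the sharper estimate of Lemma~\ref{remainders} and Proposition~\ref{GCDProp} in place of the weaker bound used there. Set $u=(q-1)\lceil\log_q(d)+1\rceil+1$. First I reduce to the fake Weil descent: since $d\geq 1$ forces $u\geq q$, and since $\deg(\mathcal{F}')=\max_{f\in\mathcal{F}}\deg(\overline{f})\leq u$ by Remark~\ref{degrees} and the hypothesis, Remark~\ref{RelateWeil} shows that it is enough to prove $d_{\overline{\mathcal{F}}_f}\leq u$. By Definition~\ref{LastFallDegree} this amounts to checking that every $F$ in the ideal $\overline{I}=(\overline{\mathcal{F}}_f)\subseteq S$ lies in $V_{\overline{\mathcal{F}}_f,\max\{u,\deg F\}}$; throughout I write $V_i=V_{\overline{\mathcal{F}}_f,i}$ and $\equiv_i$ for $\equiv_{\overline{\mathcal{F}}_f,i}$.

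The first key step is to produce the generator of the ideal in low degree. Because $R=k[X]$ is a PID, $I=(\mathcal{F}_f)$ is principal; let $g$ be a monic generator, so that $g=\gcd(\mathcal{F}\cup\{X^{q^n}-X\})$ divides $X^{q^n}-X$ and has $\deg g\leq d$ (it divides the $f\in\mathcal{F}$ with $\deg f\leq d$). I claim $\overline{g}\in V_u$. Choosing such an $f$, Proposition~\ref{GCDProp} gives $\overline{\gcd(f,X^{q^n}-X)}\in V_u$, and I then incorporate the remaining elements of $\mathcal{F}$ one at a time by the Euclidean algorithm, exactly as in Proposition~\ref{GCDProp}. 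The point that keeps this at index $u$, even when some $f'\in\mathcal{F}$ has large degree, is to replace $f'$ by $f'\Mod{(X^{q^n}-X)}$ before dividing: this does not change $\overline{f'}$, and because the monomials $\overline{X^e}$ for $0\leq e<q^n$ are pairwise distinct one gets $w(f'\Mod{(X^{q^n}-X)})=\deg\overline{f'}\leq u$. Lemma~\ref{remainders} then keeps every remainder's image $\equiv_u 0$.

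The second, and main, step is to pass from $\overline{g}$ to an arbitrary $F\in\overline{I}$; this is where the degree bookkeeping is delicate, since the weight $w(e)$, and hence $\deg\overline{X^e}$, is not monotone in $e$. Given $F$, Lemma~\ref{addmul}.3 yields $h\in R$ with $\deg h<q^n$ and $F\equiv_{\deg F}\overline{h}$, whence $\deg\overline{h}\leq\deg F$ and, since $\overline{h}\in\overline{I}$, Lemma~\ref{ideals} gives $h\in I=(g)$. Writing $\tilde{h}=h\Mod{(X^{q^n}-X)}$ we have $\overline{\tilde h}=\overline{h}$, still $g\mid\tilde h$, and $w(\tilde h)=\deg\overline{h}$ by the distinctness of the $\overline{X^e}$ noted above. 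I now apply Lemma~\ref{remainders} with $h_1=\tilde h$ and $h_2=g$: its natural index $(q-1)\lceil\log_q(\deg g)+1\rceil+1$ is at most $u$, so the hypothesis $\overline{g}\equiv_u 0$ from the first step suffices (the lemma holds verbatim with $u$ replaced by any larger index, because $V_{u'}\supseteq V_u$ for $u'\geq u$). Since $g\mid\tilde h$ the remainder $\tilde h\Mod{g}$ is $0$, and the lemma gives $\overline{h}=\overline{\tilde h}\equiv_{\max\{w(\tilde h),u\}}0$, that is $\overline{h}\in V_{\max\{u,\deg\overline h\}}\subseteq V_{\max\{u,\deg F\}}$. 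As $F-\overline{h}\in V_{\deg F}$, we conclude $F\in V_{\max\{u,\deg F\}}$, as required.

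The main obstacle is precisely the uniform degree control in the last step: without the reduction modulo $X^{q^n}-X$ the identity $w(\tilde h)=\deg\overline{h}$ fails (one only has $\deg\overline h\leq w(h)$), and a direct expansion $\overline{gs}=\sum_e s_e\overline{gX^e}$ would be governed by $\max_e\deg\overline{gX^e}$, which can exceed $\deg\overline{gs}$ after cancellation. Reducing first, so that weight and $\overline{\cdot}$-degree agree, together with the observation that dividing $\tilde h$ by its own factor $g$ produces remainder $0$, is what lets a single application of Lemma~\ref{remainders} deliver the sharp index $\max\{u,\deg\overline h\}$. The only genuinely new verifications beyond \cite{HKYY} are that Lemma~\ref{remainders} tolerates the enlarged index $u\geq(q-1)\lceil\log_q(\deg g)+1\rceil+1$, which follows by re-running its proof with $V_u$ in place of $V_{(q-1)\lceil\log_q(\deg g)+1\rceil+1}$, and the edge case $\deg g=0$ (i.e.\ $I=R$), where $\overline g$ is a nonzero constant, so $1\in V_u$ and every $\overline h$ lies in $V_{\max\{u,\deg\overline h\}}$ directly.
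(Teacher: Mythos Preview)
Your proof is correct and follows essentially the same strategy as the paper's: reduce to $d_{\overline{\mathcal{F}}_f}\leq u$ via Remark~\ref{RelateWeil}, establish $\overline{g}\in V_u$ for $g=\gcd(\mathcal{F}\cup\{X^{q^n}-X\})$, and then for an arbitrary $F\in\overline{I}$ use Lemma~\ref{addmul}.3 and Lemma~\ref{ideals} to produce $h\in I$ with $\deg h<q^n$ and $F\equiv_{\deg F}\overline{h}$, apply Lemma~\ref{remainders} with $h_2=g$, and conclude via $w(h)=\deg\overline{h}\leq\deg F$.

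Two minor remarks. First, in your second step the extra reduction $\tilde h=h\Mod(X^{q^n}-X)$ is redundant, since Lemma~\ref{addmul}.3 already returns $h$ with $\deg h<q^n$, so $\tilde h=h$; the paper simply uses $w(h_1)=\deg(\overline{h_1})$ directly. Second, your first step is actually \emph{more} careful than the paper's. The paper asserts $\overline{g}\in V_u$ ``by Proposition~\ref{GCDProp}'', but that proposition as stated only yields $\overline{\gcd(f,X^{q^n}-X)}\in V_u$ for the single chosen $f$ with $\deg f\leq d$, which is not the full gcd $g$ when $|\mathcal{F}|>1$. Your iterative incorporation of the remaining $f'\in\mathcal{F}$---reducing each modulo $X^{q^n}-X$ so that $w(\tilde f')=\deg\overline{f'}\leq u$ and then running the Euclidean algorithm step with Lemma~\ref{remainders}---is exactly what is needed to justify $\overline{g}\in V_u$ in general, and it is the genuinely new content of your write-up relative to the paper's.
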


\begin{proof}
By Remark \ref{RelateWeil}, $d_{\mathcal{F}'_f}\leq\max\{d_{\overline{\mathcal{F}}_f},q,\deg(\mathcal{F}')\}$, so we will study $d_{\overline{\mathcal{F}}_f}$, and we define $\equiv_i$ with respect to $\overline{\mathcal{F}}_f$.
Let $u=(q-1)\lceil\log_q(d)+1\rceil+1$. Let $g=\gcd(\mathcal{F}\cup\{X^{q^n}-X\})$ and let $f\in\mathcal{F}$ with $0<\deg(f)\leq d$. Then $\overline{g}\in V_u$ by Proposition~\ref{GCDProp}, since $V_{\overline{\mathcal{F}}_f,i}\supseteq V_{{\overline{\{f\}}}_f,i}$ for all $i$. 

Let $h\in\overline{I}=(\overline{\mathcal{F}}_f)$. By Lemma \ref{addmul}.3, there exists $h_1\in k[X]$ with $\deg(h_1)<q^n$ such that $\overline{h_1}\equiv_{\deg(h)}h$. Thus $\overline{h_1}\in\overline{I}$ and by Lemma \ref{ideals}, $h_1\in I$. Thus $h_1=0 \Mod g$, and by Lemma \ref{remainders} $\overline{h_1}\equiv_{\max\{w(h_1),u\}}0$. Hence $h\in V_{\max\{\deg(h),u\}}$, or equivalently $h\equiv_{\max\{\deg(h),u\}}0$, since $w(h_1)=\deg(\overline{h_1})\leq\deg(h)$. Thus by Definition \ref{LastFallDegree}, $d_{\overline{\mathcal{F}}_f}\leq u$. Now by Remark~\ref{degrees}, $\deg(\mathcal{F}')=\deg(\overline{\mathcal{F}})\leq u$, and therefore, $d_{\mathcal{F}'_f}\leq\max\{u,q\}$.
\end{proof}

\begin{example}\label{boundachieved}
Let $k=\mathbb{F}_{2^5}=\mathbb{F}_{2}[t]/(t^5+t^2+1)$. Let $\mathcal{F}=\{f_1,f_2\}$ where $f_1=t^{16}X^{11}+1$ and $f_2=tX^{31}+1$. Then $w(f_1)=3$ and $w(f_2)=5$, so $d=11$ and $(q-1)\lceil\log_q(d)+1\rceil+1=6$. Theorem~\ref{main} tells us that $d_{\mathcal{F}'_f}\leq6$. Performing a \groebner basis algorithm on $\mathcal{F}'_f$ (in degree reverse lexicographic order) in Magma~\cite{Magma} in fact gives us a solving degree of 6.
\end{example}

The theorem allows us in particular to give an upper bound on the last fall degree of HFE. In the next result, we refer to the version of HFE from~\cite{P}.

\begin{corollary}
Let $k$ be a finite field of cardinality $q^n$ and let $\mathcal{F}=\{f\}$ where 
$$f=\sum_{i,j}\beta_{i,j}X^{q^{\theta_{ij}}+q^{\varphi_{ij}}}+\sum_\ell \alpha_\ell X^{q^{\zeta_\ell}}+\mu\in k[X]$$
and $\deg(f)\leq q^t$ with $\theta_{ij},\varphi_{ij},\zeta_\ell\in\mathbb{Z}$.
Then $$d_{\mathcal{F}'_f}\leq (q-1)(t+1)+1.$$
\end{corollary}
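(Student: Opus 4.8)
The plan is to apply Theorem~\ref{main} directly, so the main task is to verify that its hypotheses hold for the specific polynomial $f$ and to compute the resulting bound explicitly. First I would take $\mathcal{F}=\{f\}$, so the smallest integer $d$ with $\deg(f)\leq d$ is simply $d=\deg(f)$, and by hypothesis $\deg(f)\leq q^t$, giving $\lceil\log_q(d)+1\rceil\leq t+1$. The candidate bound from the theorem is then $(q-1)\lceil\log_q(d)+1\rceil+1\leq (q-1)(t+1)+1$, which is exactly the target inequality. So the numerical part of the bound is immediate once I confirm I may take $d$ to be this value.

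The real content is checking the degree condition in the statement of Theorem~\ref{main}, namely that $\deg(\overline{f})\leq (q-1)\lceil\log_q(d)+1\rceil+1$ for this $d$. Here I would exploit the very special shape of the HFE polynomial $f$. Every monomial of $f$ has exponent of the form $q^{\theta_{ij}}+q^{\varphi_{ij}}$, or $q^{\zeta_\ell}$, or $0$ (the constant $\mu$). By the definition of weight, each exponent $q^a+q^b$ has weight at most $2$, each exponent $q^a$ has weight $1$, and the constant term has weight $0$. Hence $w(f)\leq 2$. By Remark~\ref{degov}, $\deg(\overline{f})\leq w(f)\leq 2$. I would then observe that $2\leq (q-1)(t+1)+1=u$ holds in all relevant cases (for $q\geq 2$ and $t\geq 1$, one has $(q-1)(t+1)+1\geq 2$), so the hypothesis $\deg(\overline{f})\leq u$ is satisfied and $d$ may indeed be taken to equal $\deg(f)$.

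With both hypotheses verified, Theorem~\ref{main} yields $d_{\mathcal{F}'_f}\leq (q-1)\lceil\log_q(d)+1\rceil+1$, and combining this with $\lceil\log_q(\deg(f))+1\rceil\leq t+1$ gives the claimed $d_{\mathcal{F}'_f}\leq (q-1)(t+1)+1$. The step I expect to require the most care is the weight computation for $\overline{f}$: one must be attentive to the possibility that two of the exponents $q^{\theta_{ij}}$, $q^{\varphi_{ij}}$, or $q^{\zeta_\ell}$ coincide, or that reduction modulo $q^n-1$ (as in Lemma~\ref{weights}.4) alters the weight. However, since $\deg(\overline{f})\leq w(f)$ holds unconditionally by Remark~\ref{degov}, and $w(f)\leq 2$ follows purely from the base-$q$ structure of the exponents regardless of such coincidences, this subtlety does not actually obstruct the argument — it merely needs to be noted. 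The remainder of the proof is then a direct invocation of the main theorem.
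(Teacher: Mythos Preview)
Your proposal is correct and follows exactly the intended route: the corollary is an immediate application of Theorem~\ref{main} once one observes (as the paper does in Section~\ref{sec:HFE}) that the HFE shape of $f$ forces $w(f)\leq 2$ and hence $\deg(\overline{f})\leq 2$. The only minor quibble is that your parenthetical restriction to $t\geq 1$ is unnecessary, since even for $t=0$ one has $(q-1)(t+1)+1=q\geq 2$; otherwise the argument is complete.
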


Example~\ref{boundachieved} shows that in general, the bound of Theorem~\ref{main} can be reached. In the case of HFE polynomials however, our bound is still larger (by approximately a factor 2) than the (heuristic) bound of \cite{DingHodges} and the experimental results of \cite{FJ} and further work needs to be done to close the gap between experiments and rigorous bounds.

\bibliographystyle{alpha}

\end{document}